\newtheorem{theorem}{Theorem}
\newtheorem{corollary}{Corollary}
\newtheorem{lemma}{Lemma}
\DeclareMathOperator{\supp}{supp}
\definecolor{color1}{rgb}{0.0000,0.4470,0.7410}
\definecolor{color2}{rgb}{0.8500,0.3250,0.0980}
\definecolor{color3}{rgb}{0.9290,0.6940,0.1250}
\definecolor{color4}{rgb}{0.4940,0.1840,0.5560}
\definecolor{color5}{rgb}{0.4660,0.6740,0.1880}
\definecolor{color6}{rgb}{0.3010,0.7450,0.9330}
\definecolor{color7}{rgb}{0.6350,0.0780,0.1840}
\definecolor{myred}{RGB}{222,45,38}
\definecolor{myorange}{RGB}{254,178,76}
\definecolor{myyellow}{RGB}{255,237,160}
\definecolor{mygreen}{RGB}{49,163,84}
\definecolor{myblue}{RGB}{49,130,189}
\definecolor{mypurple}{RGB}{197,27,138}
\pgfplotsset{compat=1.17}
\begin{document}
\title{Erasure Decoding of Quantum LDPC Codes \\ via Peeling with Cluster Decomposition}
\title{Improved Peeling Decoding of Quantum LDPC Codes over Erasures with Cluster Decomposition}
\title{Cluster Decomposition for Improved Erasure Decoding of Quantum LDPC Codes}

\author{Hanwen~Yao, 
        Mert~G\"okduman, 
        and~Henry~D.~Pfister 
\thanks{Hanwen~Yao is with Duke University, Durham, NC 27708 USA 
(e-mail: hanwen.yao@duke.edu).}
\thanks{Mert~G\"okduman is with Duke University, Durham, NC 27708 USA 
(e-mail: mert.gokduman@duke.edu).}
\thanks{Henry~D.~Pfister is with Duke University, Durham, NC 27708 USA 
(e-mail: henry.pfister@duke.edu).}
}

\maketitle

\begin{abstract}
We introduce a new erasure decoder that applies to arbitrary quantum LDPC codes.
Dubbed the cluster decoder, it generalizes the decomposition idea of Vertical-Horizontal (VH) decoding introduced by Connelly et al.\ in 2022.
Like the VH decoder, the idea is to first run the peeling decoder and then post-process the resulting stopping set.
The cluster decoder breaks the stopping set into a tree of clusters which can be solved  sequentially via Gaussian Elimination (GE).
By allowing clusters of unconstrained size, this decoder achieves maximum-likelihood (ML) performance with reduced complexity compared with full GE. 
When GE is applied only to clusters whose sizes are less than a constant, the performance is degraded but the complexity becomes linear in the block length.
Our simulation results show that, for hypergraph product codes, the cluster decoder with constant cluster size achieves near-ML performance similar to VH decoding in the low-erasure-rate regime. 
For the general quantum LDPC codes we studied, 
the cluster decoder can be used to estimate the ML performance curve with reduced complexity over a wide range of erasure rates.
\end{abstract}

\section{Introduction}
\label{sec:intro}
\IEEEPARstart{T}{o} achieve scalable and fault-tolerant quantum computation, error correction is essential for protecting quantum information against noise.
Among various error correction schemes, quantum low-density parity-check (LDPC) codes, 
such as hypergraph product codes \cite{tillich2013quantum}, 
stand out as promising candidates as they offer a much lower 
overhead \cite{gottesman2013fault,fawzi2020constant} 
compared with topological codes such as surface codes 
\cite{kitaev2003fault,dennis2002topological} and 
color codes \cite{bombin2006topological}.
Recent breakthrough results have shown constructions of quantum LDPC codes that are asymptotically \emph{good}, 
achieving both constant rate and linear minimum distance \cite{panteleev2022asymptotically,leverrier2022quantum,dinur2023good}. 
From the standpoint of practical implementation, Bravyi et al. \cite{bravyi2024high} have also constructed some bivariate bicycle codes that can be laid out on a two-dimensional grid with 
some transversal edges between them, 
which is well suited for architectures based on superconducting qubits.

In this work, we focus on decoding quantum LDPC codes 
over the quantum erasure channel \cite{bennett1997capacities}, 
a model in which the encoded qubits are affected by uniform random Pauli errors 
with locations known by the decoder. 
Attention has been drawn to this error model recently by 
new proposals and demonstrations of erased qubits in various architectures, including neutral atoms 
\cite{wu2022erasure,sahay2023high,ma2023high}, 
trapped ions \cite{kang2023quantum}, 
and superconducting qubits \cite{kubica2023erasure,teoh2023dual}. 
Quantum codes, mostly surface codes, have also been shown to achieve significantly higher thresholds over erasures compared to the standard Pauli noise in the code capacity model, the phenomenological noise model, and circuit-level noise simulations \cite{wu2022erasure,sahay2023high,kang2023quantum,kubica2023erasure,stace2009thresholds,barrett2010fault,whiteside2014upper}.

Several decoding algorithms have been studied for 
quantum erasure correction in the code capacity model, 
targeting different code families. 
Most of them build upon the classical peeling decoder 
\cite{luby2001efficient,richardson2001efficient}, 
an iterative erasure decoding algorithm that runs on the Tanner graph. 
In \cite{delfosse2020linear}, a linear-time decoder was introduced for surface codes, achieving maximum-likelihood (ML) decoding performance by peeling on a spanning tree of erasures on the surface code lattice. 
This was later extended into the union-find decoder, which can correct both Pauli errors and erasures with a higher complexity, 
initially for topological codes \cite{delfosse2021almost}, 
and later generalized to quantum LDPC codes \cite{delfosse2022toward}. 
In \cite{lee2020trimming}, a trimming decoder was proposed for the erasure decoding of color codes that combines peeling on a spanning tree with erasure set extension or vertex inactivation.
Erasure decoding of subsystem color codes has also been studied with a combination of techniques, including peeling, clustering, and gauge fixing \cite{solanki2021correcting,solanki2023decoding}.

In 2022, two erasure decoders were proposed by Connolly \emph{et al.}: 
the pruned peeling decoder and the Vertical-Horizontal (VH) decoder \cite{connolly2024fast}. 
The pruned peeling decoder combines the classical peeling decoder with a search for the stabilizers contained entirely within erasures.
The VH decoder is designed specifically for hypergraph product codes, 
and it integrates pruned peeling with iteratively 
addressing vertical and horizontal stopping sets 
following the structure of hypergraph product construction. 
For the hypergraph product codes simulated in \cite{connolly2024fast}, 
VH decoder demonstrated near-ML performance 
in the low-erasure-rate regime with a computational complexity of $O(n^2)$ for codes of length $n$.

Another recent work by G\"okduman \emph{et al.} proposed to apply the Belief Propagation with Guided Decimation (BPGD) algorithm as a general-purpose erasure decoder for quantum LDPC codes \cite{gokduman2024erasure}. 
Previously, BPGD has been shown to offer competitive performance for decoding both bit-flip and depolarizing noise in the code capacity model \cite{yao2024belief}. 
With natural modifications that incorporate Log Likelihood Ratio (LLR) adjustments and damping, BPGD on the hypergraph product codes exhibited close-to-ML performance similar to VH in the low-erasure-rate regime \cite{gokduman2024erasure}.  
However, for erasure rates near the threshold, a performance gap remains between BPGD and ML for the quantum LDPC codes studied in \cite{gokduman2024erasure}.

In this work, we introduce a new erasure decoding algorithm for quantum LDPC codes, called the cluster decoder, which integrates classical peeling with a post-processing step termed cluster decomposition. 
When the peeling decoder fails to fully recover the erasures, we decompose the residual Tanner graph induced by the remaining stopping set into a forest of clusters, with each cluster corresponding to a biconnected component of the residual Tanner graph. 
We then show that the erasures within these clusters can be solved sequentially and subsequently combined to yield a global solution for the stopping set.

To resolve the erasures remaining after peeling,  cluster decomposition breaks the problem into a series of subproblems, each associated with a distinct cluster, and solves them sequentially. 
Unlike the VH decomposition, which relies on the specific structure of hypergraph product codes, our approach is independent of the code structure and applies to arbitrary quantum LDPC codes.

By allowing clusters of unconstrained sizes, our cluster decoder achieves ML decoding performance with reduced complexity compared to direct Gaussian Elimination. 
When we constrain the cluster sizes by a constant, the cluster decoder can be shown to achieve linear decoding complexity. 

Our simulation results show that, for hypergraph product codes, the cluster decoder with constrained cluster sizes, similar to the VH decoder, closely approaches the ML performance in the low-erasure-rate regime. 
For the general quantum LDPC code we tested, with reduced complexity, the cluster decoder effectively produces the ML performance curve in the high-erasure-rate regime.

\section{Preliminaries}
\label{sec:prelim}
\subsection{Graphs}
\label{subsec:graphs}
A \emph{graph} $\mathcal{G}=(\mathcal{V},\mathcal{E})$ 
is defined by a node set $\mathcal{V}$ and an edge set $\mathcal{E}$ 
consisting of edges $e$ represented 
by unordered pairs $e = (v_1,v_2)$ for $v_1,v_2\in \mathcal{V}$. 
All graphs discussed in this paper are assumed to be 
{undirected} without {loops} and {multi-edges}.
A graph $\mathcal{G}$ is called \emph{connected} if there is a {path} 
between every pair of vertices. 
Otherwise, $\mathcal{G}$ is called \emph{disconnected}.

Given a graph $\mathcal{G} = (\mathcal{V},\mathcal{E})$, we define:
\begin{itemize}
    \item {\bf connected component}: a maximal set of nodes in 
    $\mathcal{V}$ whose induced subgraph is connected.
    \item {\bf articulation point}: a node $v\in \mathcal{V}$ 
    whose removal with its adjacent edges increases 
    the number of connected components in $\mathcal{G}$.
    \item {\bf bridge}: an edge $e\in \mathcal{E}$ 
    whose removal increases the number of connected components 
    in $\mathcal{G}$.
    \item {\bf biconnected component}: a maximal set of nodes in 
    $\mathcal{V}$ whose induced subgraph has no articulation point.
\end{itemize}
At the top of Figure~\ref{fig:bc-tree} we see an example of a graph 
showing the above-defined elements. 
The graph $\mathcal{G}$ in this example is connected by itself, 
so it only has one connected component. 
$\mathcal{G}$ has two articulation points node 5 and node 8, 
and two bridges edge $(8,10)$ and edge $(8,11)$.
$\mathcal{G}$ has five biconnected components 
$\{3,4,5,7\},\{1,2,5\},\{5,6,8,9\},\{8,11\}$, and $\{8,10\}$.
Note that the set of two nodes involved in a bridge 
also counts as a biconnected component.

The biconnected components and articulation points of a graph 
can be computed with the classic Hopcroft-Tarjan's algorithm 
\cite{hopcroft1973algorithm}. 
This sequential algorithm works by performing a depth-first search 
on $\mathcal{G}$. 
For each connected component, edges not included 
in the depth-first search tree are called \emph{back edges} 
if they connect a node to its ancestor in the tree 
\cite[page~569]{cormen2022introduction}. 
A non-root node $s$ is an articulation point if the subtree 
rooted at $s$ has no back edges pointing to any ancestor of $s$, 
and a root node $s$ is an articulation point if it has more than one 
descendant in the depth-first search tree.
For a graph $\mathcal{G}=(\mathcal{V},\mathcal{E})$, 
the Hopcroft-Tarjan's algorithm runs with time complexity 
$O(|\mathcal{V}|+|\mathcal{E}|)$ \cite{hopcroft1973algorithm}.

Any connected graph can be decomposed into a tree of biconnected components and articulation points 
\cite{Harary1967,gallai1964elementare}.
In this work, we call it the \emph{cluster tree} and refer to each of the biconnected components as a \emph{cluster}.
Figure~\ref{fig:bc-tree} shows an example of this decomposition.
A cluster tree of a graph $G$ has two types of nodes: \emph{cut nodes} representing the articulation points in $G$, and \emph{cluster nodes} representing the biconnected components in $G$.
A graph with more than one connected component can be decomposed into a cluster forest. 

The following lemma shows that the cluster forest of a graph $G$ can be constructed in linear time using Hopcroft-Tarjan's algorithm.
\begin{lemma}
\label{lm:cluster_construction_complexity}
    For a graph $\mathcal{G}=(\mathcal{V},\mathcal{E})$, 
    the construction of the cluster forest can be achieved with 
    time complexity at most $O(2|\mathcal{V}|+|\mathcal{E}|)$.
\end{lemma}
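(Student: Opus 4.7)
The plan is to invoke Hopcroft-Tarjan's algorithm as a black box, then show that assembling the cluster forest from its output costs only $O(|\mathcal{V}|)$ additional work. The crux is a linear bound on the forest size itself.

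I would first run Hopcroft-Tarjan on $\mathcal{G}$, which by the cited result returns in time $O(|\mathcal{V}|+|\mathcal{E}|)$ both the list of biconnected components (as subsets of edges, hence of vertices) and the set of articulation points. I would then allocate one cluster node per biconnected component and one cut node per articulation point, and scan each component's vertex list to insert an edge in the forest whenever one of its vertices is an articulation point. The work in this second step is proportional to the number of (cluster, articulation-point) incidences, which equals the number of edges in the resulting forest.

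The main obstacle is bounding that forest size. The count of cut nodes is at most $|\mathcal{V}|$ trivially. For the number $c$ of cluster nodes in a single connected component with $n \geq 2$ vertices and $a$ articulation points, I would double-count vertex-cluster incidences: every non-articulation vertex lies in exactly one cluster, while each articulation point lies in as many clusters as its degree in the cluster tree, and those degrees sum to the $c+a-1$ edges of the cluster tree. Hence the total $\sum_{C} |V(C)|$ equals $(n-a)+(c+a-1) = n+c-1$, and combining with $|V(C)| \geq 2$ per cluster forces $c \leq n-1$. Isolated vertices trivially contribute one cluster each. Summing over connected components, the cluster forest has at most $2|\mathcal{V}|$ nodes and at most $2|\mathcal{V}|-1$ edges, so the assembly step runs in $O(|\mathcal{V}|)$ time. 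Adding the Hopcroft-Tarjan cost yields the claimed $O(2|\mathcal{V}|+|\mathcal{E}|)$.
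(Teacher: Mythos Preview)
Your proposal is correct and follows the same two-step strategy as the paper: invoke Hopcroft--Tarjan for the biconnected components and articulation points in $O(|\mathcal{V}|+|\mathcal{E}|)$ time, then scan the clusters' vertex lists to wire up the forest in $O(|\mathcal{V}|)$ additional time. Your double-counting argument bounding $\sum_C |V(C)| = n + c - 1 \le 2n-2$ is in fact more careful than the paper's own proof, which simply asserts that ``examining each node within every cluster'' costs $O(|\mathcal{V}|)$ without justifying why articulation points shared among multiple clusters do not inflate the total.
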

\begin{proof}
To construct the cluster forest for a graph 
$\mathcal{G}=(\mathcal{V},\mathcal{E})$,
we can first apply Hopcroft-Tarjan's algorithm 
with time complexity $O(|\mathcal{V}|+|\mathcal{E}|)$ to
obtain the set of clusters $B_1,\ldots,B_s$ 
and the set of cut nodes $s_1,\ldots,s_t$.
Then, for all $i,j$ where $s_j$ is contained in $B_i$, we add an edge between the cluster $B_i$ 
and the cut node $s_j$.
This can be achieved by examining each node within every cluster, 
which takes a time complexity of $O(|\mathcal{V}|)$.
Therefore, the cluster forest can be constructed with time complexity
at most $O(2|\mathcal{V}|+|\mathcal{E}|)$.
\end{proof}

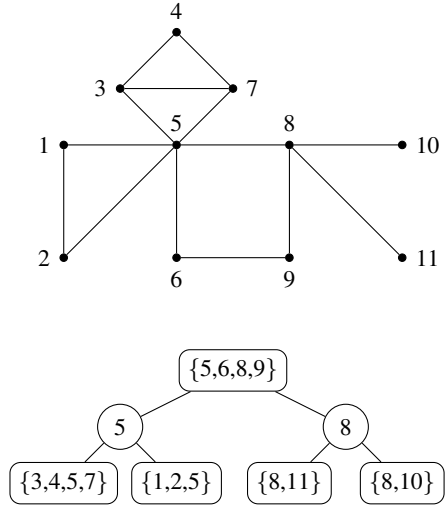
\begin{figure}[t]
    \centering
    \begin{tikzpicture}[scale=1.5,every label/.append style={font=\small}]
        \tikzset{
            blacknode/.style={draw,circle,fill=black,inner sep=1pt,font=\tiny},
            blocknode/.style={draw,rectangle,rounded corners,font=\small},
            cutnode/.style={draw,circle,font=\small},
        }
        \node[blacknode,label=left:{1}] (1) at (0,1) {};
        \node[blacknode,label=left:{2}] (2) at (0,0) {};
        \node[blacknode,label=left:{3}] (3) at (0.5,1.5) {};
        \node[blacknode,label=above:{4}] (4) at (1,2) {};
        \node[blacknode,label=above:{5}] (5) at (1,1) {};
        \node[blacknode,label=below:{6}] (6) at (1,0) {};
        \node[blacknode,label=right:{7}] (7) at (1.5,1.5) {};
        \node[blacknode,label=above:{8}] (8) at (2,1) {};
        \node[blacknode,label=below:{9}] (9) at (2,0) {};
        \node[blacknode,label=right:{10}] (10) at (3,1) {};
        \node[blacknode,label=right:{11}] (11) at (3,0) {};
        \draw (1) -- (2) -- (5) -- (1);
        \draw (3) -- (4) -- (7) -- (3) -- (5) -- (7);
        \draw (5) -- (6) -- (9) -- (8) -- (5);
        \draw (8) -- (10);
        \draw (8) -- (11);

        \begin{scope}[shift={(1.5,-1)}]
        \node[blocknode] (6) at (0, 0) {\{5,6,8,9\}};
        \node[blocknode] (10) at (1.5,-1) {\{8,10\}};
        \node[blocknode] (11) at (0.5,-1) {\{8,11\}};
        \node[blocknode] (3) at (-1.5,-1) {\{3,4,5,7\}};
        \node[blocknode] (1) at (-0.5,-1) {\{1,2,5\}};
        \node[cutnode] (8) at (1,-0.5) {8};
        \node[cutnode] (5) at (-1,-0.5) {5};
        \draw (5) -- (6) -- (8);
        \draw (8) -- (10);
        \draw (8) -- (11);
        \draw (3) -- (5);
        \draw (1) -- (5);
        \end{scope}
    \end{tikzpicture}
    \caption{Example of decomposing a graph $\mathcal{G}$ 
    into a cluster tree.}
    \label{fig:bc-tree}
\end{figure}

\subsection{Binary Linear Codes and Erasure Syndrome Decoding}
Let $\mathbb{F}_2 = \{0,1\}$ be the binary Galois field 
defined by modulo-2 addition and multiplication.
A length-$n$ binary linear code $\mathcal{C}\subseteq\mathbb{F}_2^n$ 
is a subset of length-$n$ binary strings satisfying 
$w+x \in \mathcal{C}$ for all $w,x \in \mathcal{C}$.
Such a code forms a vector space over $\mathbb{F}_2$.
A generator matrix $G\in \mathbb{F}_2^{k \times n}$ for 
$\mathcal{C}$ is a $k \times n$ matrix whose rows span the code.
A parity-check matrix $H\in \mathbb{F}_2^{(n-k) \times n}$ 
for $\mathcal{C}$ is an $(n-k) \times n$ matrix 
whose rows are orthogonal to the code.

For transmission over the binary erasure channel (BEC), 
each code bit in a codeword $x\in \mathcal{C}$ 
is erased with some probability, 
with the received vector given by $y\in\{0,1,e\}^n$, 
where $e$ indicates an \emph{erasure}.
In \textit{erasure syndrome decoding}, the decoder only sees the locations of the erasures and the syndrome \(s = H\tilde{y}\), where $\tilde {y} \in \mathbb{F}_2^n$ is a binary version of $y$ whose erasures are replaced by uniform random bits.
The goal of the decoder is to recover $x$ given the syndrome $s$ and the erasure locations.
For further details, see~\cite[Section 2]{connolly2024fast}.

\subsection{Stabilizer Formalism}
An $[[n,k]]$ quantum stabilizer code is an error correction code that protects $k$ logical qubits with $n$ physical qubits. 
The pure quantum state of a single qubit can be represented as a unit vector 
in the two-dimensional Hilbert space $\mathbb{C}_2$.
Define Pauli operators as the $2\times 2$ complex Hermitian matrices
\begin{equation}
\begin{aligned}
    &I = \begin{bmatrix}
        1 & 0 \\ 0 & 1
    \end{bmatrix},\,
    X = \begin{bmatrix}
        0 & 1 \\ 1 & 0
    \end{bmatrix},\\
    &Z = \begin{bmatrix}
        1 & 0 \\ 0 & -1
    \end{bmatrix},\,
    Y = iXZ = \begin{bmatrix}
        0 & -i\\ i & 0
    \end{bmatrix},
\end{aligned}
\end{equation}
where $i=\sqrt{-1}$. 
In an $n$-qubit system, 
given two binary vectors $a=(a_1,a_2,\ldots,a_n)\in \mathbb{F}_2^n$, and $b=(b_1,b_2,\ldots,b_n)\in \mathbb{F}_2^n$, 
we define the $n$-fold Pauli operator $D(a,b)$ as
\begin{equation}
    D(a,b) = X^{a_1}Z^{b_1}\otimes X^{a_2}Z^{b_2}\otimes \ldots \otimes X^{a_n}Z^{b_n}.
\end{equation}
Then the Pauli operators $i^k D(a,b)$ with $a,b\in\mathbb{F}_2^n$ and an overall phase $i^k$ with $k\in\{0,1,2,3\}$ form the $n$-qubit Pauli group, denoted as $\mathcal{P}_n$.

Define the \emph{symplectic inner product} between length-$2n$ binary vectors $(a,b)$ and $(a',b')$ as 
\begin{equation}
\begin{aligned}
    \langle (a,b),(a',b') \rangle_s 
    &= (a',b') \Lambda (a,b)^T \bmod 2
\end{aligned}
\end{equation}
where
\begin{equation}
    \Lambda = \begin{bmatrix} 0 & I_n \\ I_n & 0 \end{bmatrix}.
\end{equation}
This product equals 0 if the two Pauli operators $D(a,b)$ and $D(a',b')$ commute and 1 if they anti-commute.

A \emph{quantum stabilizer code} $\mathcal{C}$ with $n$ physical qubits 
is defined by an abelian subgroup $\mathcal{S}\subseteq \mathcal{P}_n$ 
called the \emph{stabilizer group} with $-I^{\otimes n}\notin \mathcal{S}$. 
The Pauli operators in $\mathcal{S}$ are referred to as the \emph{stabilizers}. 
The code space consists of all states in
$\mathbb{C}_2^{\otimes n}$ stabilized by $\mathcal{S}$:
\begin{equation}
    \mathcal{C} = \{|\psi\rangle\in\mathbb{C}_2^{\otimes n}:
    M|\psi\rangle = |\psi\rangle,\, \forall M\in\mathcal{S}\}.
\end{equation}
Code $\mathcal{C}$ has dimension $k$ if $\mathcal{S}$ has $n-k$ independent generators. 

The \emph{weight} of a Pauli operator in $\mathcal{P}_n$
is defined as the number of nontrivial elements in its $n$-fold Kronecker product that are not equal to $I$.
For example, the Pauli operator $X\otimes I\otimes Z$ in $\mathcal{P}_3$ has weight 2.
We define the \emph{distance} of a stabilizer code $\mathcal{C}$ as the minimum weight of all Pauli operators in $N(\mathcal{S})\backslash \mathcal{S}$, 
where $N(\mathcal{S})$ denotes the normalizer group of $\mathcal{S}$ in $\mathcal{P}_n$.
Code $\mathcal{C}$ is called an $[[n,k,d]]$ stabilizer code if it has distance $d$, and it is called \emph{degenerate} if $d$ is larger than the minimum weight of its stabilizers.

In the symplectic representation, the stabilizer group $\mathcal{S}$ is constructed from the rows of the stabilizer matrix 
\begin{equation}
H = [H_x,H_z],
\end{equation}
where $H_x,H_z\in\mathbb{F}_2^{m\times n}$ are binary matrices with $m$ rows and $n$ columns.
In particular, each row $(h_x,h_z)$ of $H$ defines the stabilizer $D(h_x,h_z)$ and the set of stabilizers defined by all rows generates the stabilizer group $\mathcal{S}$.

An important class of stabilizer codes, known as 
Calderbank–Shor–Steane (CSS) codes \cite{calderbank1996good,steane1996multiple} 
has each stabilizer in the form $D(a,0)$ (i.e., only $X$ operators) 
or $D(0,b)$ (i.e., only $Z$ operators).
In this case, we have
\begin{equation}
H_x = \begin{bmatrix} 0 \\ H_2 \end{bmatrix}, \quad H_z = \begin{bmatrix} H_1 \\ 0 \end{bmatrix},
\label{eq:H1_G2}
\end{equation}
Thus, the stabilizer matrix has the form
\begin{equation}
    H = \begin{bmatrix}
        0 & H_1 \\
        H_2 & 0
    \end{bmatrix}.
\end{equation}
For CSS codes, the commutativity constraint for the stabilizers given by $H \Lambda H^T = 0$ reduces to $H_2H_1^T = 0$. 
Throughout this work, we will focus our discussion on the CSS codes.

\subsection{Syndrome Decoding over Erasures}
\label{subsec:syndrome_decoding_over_erasures}
In this work, we consider the quantum erasure channel \cite{bennett1997capacities,grassl1997codes} as our noise model, 
where each qubit in the encoded state of an $[[n,k]]$ stabilizer 
code is erased independently with probability $p$.
Upon erasure, the density matrix of the qubit $\rho$ is replaced 
by the completely mixed state $I/2$. 
By writing the completely mixed state as
\begin{equation}
    I/2 = \frac{1}{4}\left(\rho + X\rho X + Y\rho Y + Z\rho Z\right),
\end{equation}
we can interpret it as the erased qubit affected by a Pauli $I$, $X$, $Y$, or $Z$ error, chosen uniformly at random.
Denote $\varepsilon\in\mathbb{F}_2^n$ as a length-$n$ \emph{erasure pattern} vector where $\varepsilon_i = 1$ if the $i$-th qubit is erased.
Then we can view the encoded state $|\psi\rangle$ as affected by 
an $n$-qubit Pauli error $E=D(x,z)\in\mathcal{P}_n$ via 
$|\psi\rangle\rightarrow E|\psi\rangle$, 
where all the non-trivial elements in $E$ are located 
in the support of the erasure pattern $\varepsilon$.
In other words, 
we have both $\mathrm{supp}(x)\subseteq\mathrm{supp}(\varepsilon)$ and 
$\mathrm{supp}(z)\subseteq\mathrm{supp}(\varepsilon)$, where $\supp(\cdot)$ 
denotes the set of nonzero locations of a vector.

Similar to the syndrome decoding for Pauli errors, 
the goal of the decoder is to correct $E$ by measuring the stabilizers in $H$.
Measuring a stabilizer $M = D(a,b) \in \mathcal{S}$ gives the symplectic inner product $\langle (a,b),(x,z)\rangle_s$, showing whether $E$ commutes or anti-commutes with $M$.
By measuring all the stabilizers in $H$, we obtain a length-$m$ binary syndrome vector 
\begin{equation}
s = (x,z) \Lambda H^T.
\end{equation}
The difference is that over the quantum erasure channel, 
the decoder has additional knowledge of the erasure pattern $\varepsilon$. 

Given a syndrome $s$ and an erasure pattern $\varepsilon$, there can be multiple Pauli errors in the support of $\varepsilon$ that matches the syndrome.
Denote the set of those Pauli errors as $\mathcal{E}(\varepsilon;s)$.
In our erasure error model, all Pauli errors in 
$\mathcal{E}(\varepsilon;s)$ 
are equally likely.
For the stabilizer code, denote the coset of its stabilizer group $\mathcal{S}$ shifted by a Pauli error $E\in\mathcal{P}_n$ as $E\mathcal{S}$. 
Since Pauli errors that differ by a stabilizer 
affects the logical state of the encoded qubits equivalently, 
the optimal decoding strategy would be finding the coset $E\mathcal{S}$ that has the largest intersection with $\mathcal{E}(\varepsilon;s)$.
However, as discussed in \cite[Sec. III]{delfosse2020linear}, 
we know the size of this intersection, 
if nontrivial, 
only depends on how the stabilizer group $\mathcal{S}$ intersects with $\mathcal{E}(\varepsilon;0)$:
\begin{equation}
    |E\mathcal{S} \cap \mathcal{E}(\varepsilon;s)| = 
    |\mathcal{S} \cap \mathcal{E}(\varepsilon;0)|.
\end{equation}
This means that all cosets of the stabilizer group have the same intersection with $\mathcal{E}(\varepsilon;s)$ and are equally likely \cite[Lemma 1]{delfosse2020linear}.
Therefore, the optimal decoding strategy is simply finding an arbitrary Pauli error $\widehat{E}$ in $\mathcal{E}(\varepsilon;s)$ as the decoding output, as the cosets represented by different choices of $\widehat{E}$ are all equally likely. 
After that, we apply it to correct the affected qubits as $E|\psi\rangle\rightarrow \widehat{E}E|\psi\rangle$. 

This decoding process has four possible outcomes:
\begin{enumerate}
    \item {\bf Failure}: 
    The decoder fails to find an $\widehat{E}$ in the support of $\varepsilon$ that matches the syndrome $s$.
    \item {\bf Successful (Exact Match)}:
    The estimated error is equal to the channel error, 
    i.e., $\widehat{E} = E$.
    \item {\bf Successful (Degenerate Error)}:
    The difference between the estimated error and the channel error 
    is a stabilizer, i.e., $\widehat{E}E\in\mathcal{S}$.
    \item {\bf Failure (Logical Error)}: 
    The difference between the estimated error and the channel error
    is a logical operator, 
    i.e., $\widehat{E}E\in N(\mathcal{S})\backslash\mathcal{S}$.
\end{enumerate}
Over the quantum erasure channel, 
it is impossible to distinguish between 
a degenerate error and a logical error since they are equally likely. 
Therefore, the optimal strategy is to minimize the likelihood 
of failure due to the first outcome, 
where the decoder fails to find an error matching the syndrome.

For CSS codes, $H_1$ in $H_z$ defines the $Z$-stabilizers that interact with Pauli-$X$ errors, and $H_2$ in $H_x$ defines the $X$-stabilizers that interact with Pauli-$Z$ errors.
Thus for encoded state affected by a Pauli error $E=D(x,z)$, 
we can separate the syndrome vector into two parts $s = (s_x,s_z)$ where $s_x = x H_1^T$ and $s_z = z H_2^T$. 
In this way, the decoding problem can be divided into correcting the Pauli-$X$ error 
$D(x,0)$ and the Pauli-$Z$ error $D(0,z)$ separately, 
both with knowledge of the erasure pattern $\varepsilon$.
In this work, we focus on correcting Pauli-$X$ errors using measurements of $Z$-stabilizers, as the correction of Pauli-$Z$ errors follows the exact same process.

\subsection{Gaussian Elimination and Peeling}
Consider decoding an $[[n,k]]$ CSS code corrupted by a Pauli-$X$ error $D(x,0)$ in the support of an erasure pattern $\varepsilon$. 
This is essentially the same as solving the implied erasure syndrome decoding problem for a classical linear code with parity check matrix $H_1$. 
As discussed in Section~\ref{subsec:syndrome_decoding_over_erasures}, an ML decoder can output any $\widehat{x}$ that matches the syndrome. 
This can be achieved by solving a system of linear equations 
$\widehat{x}H_1^T = s_x$ with 
$\supp(\widehat{x})\subseteq\supp(\varepsilon)$ over $\mathbb{F}_2$
using Gaussian Elimination. 
We refer to it as the \emph{Gaussian Elimination decoder}, 
which has a time complexity of $O(\nu^3)$ 
if there are $\nu$ erased qubits.
The constants in its complexity can be reduced
by leveraging the sparsity of $H_1$ \cite{burshtein2004efficient}. 
However, this complexity is generally considered high for codes with moderate block lengths \cite{bocharova2018ldpc}.
Therefore, we first turn to the classical {peeling decoder}  \cite{luby2001efficient,richardson2001efficient} with linear complexity 
that runs on a Tanner graph.

A Tanner graph $\mathcal{G} = (\mathcal{V},\mathcal{C},\mathcal{E})$ 
is a bipartite graph with the variable nodes in 
$\mathcal{V}=\{v_1,\ldots,v_n\}$ representing the elements 
of $x=(x_1,\ldots,x_n)$ and the check nodes in 
$\mathcal{C}=\{c_1,\ldots,c_m\}$ representing 
the $Z$-stabilizers in $H_1$.
A variable node $v_i$ is connected to a check node $c_j$ 
if $H_1 (i,j)=1$.

For a Tanner graph $\mathcal{G}$, we define:
\begin{itemize}
    \item {\bf dangling check}: a check node with degree one.
    \item {\bf variable-induced subgraph}: a subgraph of $\mathcal{G}$ 
    induced by a set of variable nodes and their connected check nodes.
    \item {\bf stopping set}: a set of variable nodes 
    whose variable-induced subgraph has no dangling checks.
\end{itemize}
An example Tanner graph with a variable-induced subgraph 
of a stopping set is shown in Fig.~\ref{fig:tanner}.

\usetikzlibrary{shapes.geometric}
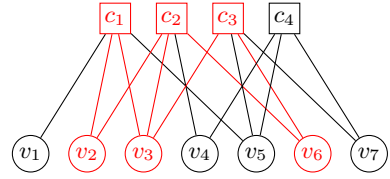
\begin{figure}[t]
    \centering
    \begin{tikzpicture}[scale=1.5,
    square/.style={regular polygon, regular polygon sides=4}]
    \tikzset{
        v_node/.style={draw,circle,inner sep=0.5mm, font=\small},
        c_node/.style={draw,square,inner sep=0.0mm,font=\small},
    }
    \foreach \x in {1,4,5,7}
        \node[v_node] (v\x) at (\x/2,0) {$v_\x$};
    \foreach \x in {2,3,6}
        \node[v_node, color=red] (v\x) at (\x/2,0) {$v_\x$};
    \foreach \x in {1,2,3}
        \node[c_node,color=red] (c\x) at (\x/2+1.5/2,1.2) {$c_\x$};
    \foreach \x in {4}
        \node[c_node] (c\x) at (\x/2+1.5/2,1.2) {$c_\x$};

    \draw (c1)--(v1);
    \draw[color=red] (c1)--(v2);
    \draw[color=red] (c1)--(v3);
    \draw (c1)--(v5);
    
    \draw[color=red] (c2)--(v2);
    \draw[color=red] (c2)--(v3);
    \draw (c2)--(v4);
    \draw[color=red] (c2)--(v6);
    
    \draw[color=red] (c3)--(v3);
    \draw (c3)--(v5);
    \draw[color=red] (c3)--(v6);
    \draw (c3)--(v7);
    
    \draw (c4)--(v4);
    \draw (c4)--(v5);
    \draw (c4)--(v7);
    \end{tikzpicture}
    \caption{Example of a Tanner graph and a variable-induced subgraph of a stopping set $\{v_2,v_3,v_6\}$.}
    \label{fig:tanner}
\end{figure}

To initialize the peeling decoder, 
we label the check nodes with their syndrome bits and consider the 
variable-induced subgraph $\mathcal{G}'$ for the erased qubits.
The peeling decoder works by iteratively resolving error bits connecting to dangling checks, updating the residual syndrome, 
and then removing the resolved variable node and 
its adjacent edges from $\mathcal{G}'$.
The pseudo-code for the peeling decoder is provided in 
Algorithm~\ref{alg:peeling}, 
in which the set of check nodes connected to a variable node 
$v_j$ is denoted by $\partial v_j$. 

\begin{algorithm}[t]
\caption{Peeling decoding}
\label{alg:peeling}
\KwIn{
block length $n$, 
Tanner graph $\mathcal{G}=(\mathcal{V},\mathcal{C},\mathcal{E})$ 
for $H_1$, syndrome $s_x$, erasure pattern $\varepsilon$}
\KwOut{$\widehat{x}$, residual syndrome $s'_x$, 
residual Tanner graph $\mathcal{G}'$}
$\widehat{x}\leftarrow$ all-zero vector of length $n$\\
$s_x' \leftarrow s_x$\\
$\mathcal{G}'\leftarrow$ variable-induced subgraph 
of $\varepsilon$ in $\mathcal{G}$\\ 
$\mathcal{D} \leftarrow\{\text{dangling checks in }\mathcal{G}'\}$\\
\While{$\mathcal{D}$ is not empty}{
    $c_i\leftarrow$ a dangling check in $\mathcal{D}$\\
    $v_j\leftarrow$ variable node connected by $c_i$ in $\mathcal{G}'$\\
    $\widehat{x}_j\leftarrow (s'_x)_j$\\
    $(s'_{x})_k\leftarrow (s'_{x})_k + \widehat{x}_j$ 
    for all $c_k\in \partial x_j$\\
    Check all $c_k\in \partial x_j$ in $\mathcal{G}'$ to 
    update $\mathcal{D}$\\
    Remove $c_i,v_j$ and their adjacent edges from $\mathcal{G}'$
}
\Return{$\widehat{x}$, $s'_x$, $\mathcal{G}'$}
\end{algorithm}

This peeling process continues until either all erasures are recovered, 
or there are no more dangling checks in 
$\mathcal{G}'$. 
In the former, the erasure pattern $\varepsilon$ is called \emph{peelable}. 
In that case,
the residual Tanner graph $\mathcal{G}'$ will be empty and 
the residual syndrome $s'_x$ will be all-zero.
Otherwise, the residual Tanner graph $\mathcal{G}'$ will be the 
variable-induced subgraph of the maximal {stopping set} 
covered by the erasures \cite{di2002finite}.

\section{Erasure Decoding via Cluster Decomposition}
\label{sec:cluster}

The peeling decoder performs well for classical LDPC codes over erasures \cite{luby2001efficient}.
However, directly applying the peeling decoder to quantum LDPC codes often results in degraded performance due to code degeneracy. 
In the case of CSS quantum LDPC codes, $X$-stabilizers represented by the rows in the matrix $H_2$ generally have low weights. 
Consequently, when decoding Pauli-$X$ errors on randomly erased qubits, it is common for the erasure pattern $\varepsilon$ to cover one or more $X$-stabilizers. 
In these cases, there are multiple errors within the support of $\varepsilon$ that match the syndrome, each differing from the others by an $X$-stabilizer. 
From an ML decoding perspective, this is not an issue, as errors differing by stabilizers lead to the same decoding outcome. 
However, the qubits associated with these $X$-stabilizers form stopping sets in the Tanner graph of $H_1$ \cite[Lemma 1]{connolly2024fast}, which obstructs the peeling decoding process.
A straightforward approach to improving the peeling decoder is to augment it with Gaussian Elimination to resolve the remaining erasures in the stopping sets. However, directly applying Gaussian Elimination as a post-processing step requires cubic time complexity in the size of the stopping set.

For hypergraph product codes, Connolly \emph{et al.} observed that after pruned peeling, which is peeling combined with a stabilizer search, most remaining erasures are due to stopping sets in the classical component codes \cite{connolly2024fast}. 
Based on this observation, they proposed decomposing the stopping sets into vertical and horizontal clusters by constructing a VH graph, and solving those clusters sequentially. 
This decomposition transforms the problem of solving the remaining erasures into multiple smaller subproblems, each corresponding to a vertical or horizontal stopping set of the classical component code, thereby reducing overall complexity. 
One future direction they mentioned is extending VH decomposition to other families of codes, which seems challenging, as VH specifically exploits the product structure of hypergraph product codes. 

In this section, we introduce a post-processing procedure for peeling, called \emph{cluster decomposition}, which generalizes the stopping set decomposition approach of VH and applies to arbitrary quantum LDPC codes. 
When the peeling decoder fails to fully recover the erasures, we compute the biconnectivity structure of the residual Tanner graph $\mathcal{G}'$ and decompose it into a cluster forest. 
This decomposition reduces the problem of solving the remaining erasures to several smaller subproblems that can be solved sequentially.

We refer to the combination of peeling and cluster decomposition as the cluster decoder. By allowing clusters to have unconstrained sizes, the cluster decoder achieves maximum likelihood (ML) decoding performance with reduced complexity compared to direct Gaussian Elimination. When the cluster sizes are constrained to a constant, we show that the cluster decoder achieves linear decoding complexity.

The cluster decoder operates in three phases: 
\begin{enumerate} 
    \item \textbf{Peeling on the Tanner graph}: 
    Attempt to recover the erasure pattern using a peeling decoder. If successful, terminate the process. 
    \item \textbf{Cluster decomposition}: 
    If the erasure pattern is not peelable, decompose the residual Tanner graph into a cluster forest. 
    \item \textbf{Constructing cluster solutions}: 
    Compute solutions for the clusters in each cluster tree sequentially using Gaussian Elimination, then merge these solutions into a global solution for the remaining erasures. 
\end{enumerate}

We note that in the first phase of our algorithm, peeling can be replaced with pruned peeling proposed in \cite{connolly2024fast} for potential improvements. 
The pruned peeling approach is based on the idea that, if an $X$-stabilizer is entirely covered by the erasure pattern, we are free to fix one bit in its support to 0. 
When peeling reaches a stopping set, pruned peeling suggests searching for linear combinations of up to $M$ stabilizers which are rows in the $X$-stabilizer matrix $H_2$ that are fully covered by erasures and fixing one of their bits to 0. 
This operation can potentially resolve the stopping set or reactivate the peeling procedure, with an additional linear time complexity.

In this work, however, we only focus on peeling combined with cluster decomposition. 
Our intuition is that in the low-erasure-rate regime, cluster decomposition can handle cases involving low-weight $X$-stabilizers within stopping sets by identifying them as small clusters. 
We leave a more comprehensive study on pruned peeling combined with cluster decomposition for future work.

In the remainder of this section, we provide a detailed description of the second and third phases of the algorithm.

\subsection{Cluster Decomposition}
Consider decoding a Pauli-$X$ error $D(x,0)$ for 
a CSS codes with $\supp(x)\subseteq\supp(\varepsilon)$ 
given an erasure pattern $\varepsilon$.
If $\varepsilon$ is not peelable, 
after applying the peeling decoder in Algorithm~\ref{alg:peeling},
we are left with a non-zero residual syndrome $s'_x$
and a residual Tanner graph $\mathcal{G}'$, which is 
a variable-induced subgraph of the unresolved erasures.
Denote the residual erasure pattern as $\varepsilon'$,
the remaining problem is to find a solution $\widehat{x}'$ in the support 
of $\varepsilon'$ that match the residual syndrome $s'_x$ as 
$\widehat{x}'H_1^T = s'_x$.

We start by decomposing the residual Tanner graph $\mathcal{G}'$ into a cluster forest. 
As outlined in Section \ref{subsec:graphs}, 
this can be achieved by first applying the Hopcroft-Tarjan's algorithm on $\mathcal{G}'$ 
to identify its set of clusters and cut nodes, 
and then connecting clusters to cut nodes 
whenever a cut node is contained within a cluster.
Then, we can suspend the tree from an arbitrary cluster $B_r$, 
and view it as a cluster tree {rooted} at $B_r$.
After assigning one cluster as the root, 
the parent-child relation between the clusters and cut nodes 
in the tree will be uniquely determined.
By construction, the cluster tree alternates cut node levels 
to cluster levels and always has clusters as leaves.
An example of the cluster decomposition 
of a stopping set is provided in Figure~\ref{fig:cluster_tree}.
Note that a cluster is not necessarily a variable-induced subgraph. 
For example, cluster $B_r$ 
in Figure~\ref{fig:cluster_tree} is not a variable-induced subgraph 
of $\{v_4,v_5,v_6\}$.

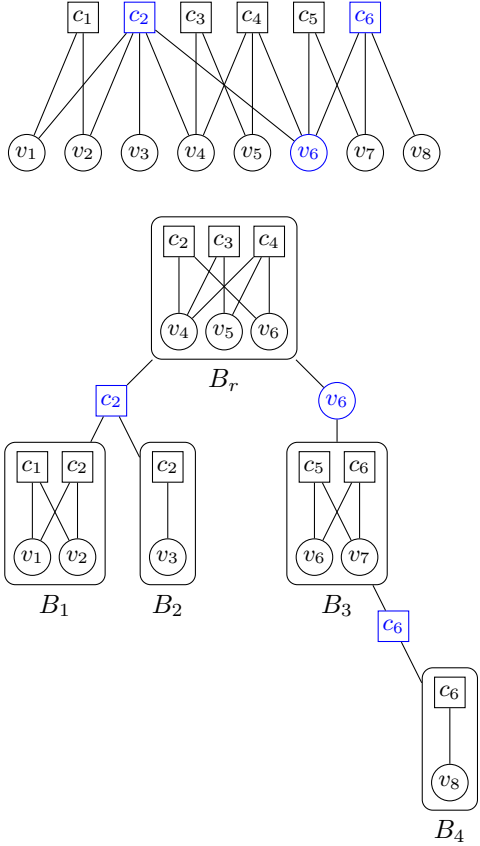
\begin{figure}[t]
    \centering
    \begin{tikzpicture}[scale=1.5,
    square/.style={regular polygon, regular polygon sides=4},]
    \tikzset{
        v_node/.style={draw,circle,inner sep=0.5mm, font=\small},
        c_node/.style={draw,square,rounded corners=false,solid,inner sep=0.0mm,font=\small},            
        blocknode/.style={draw,rectangle,rounded corners,font=\small},
        cutnode/.style={draw,circle,font=\small},
    }                                                                       
    \foreach \x in {1,2,3,4,5,7,8}                                            
        \node[v_node] (v\x) at (\x/2,0) {$v_\x$};
    \foreach \x in {6}                                            
        \node[v_node, color=blue] (v\x) at (\x/2,0) {$v_\x$};
    \foreach \x in {1,3,4,5}
        \node[c_node] (c\x) at (\x/2+1.0/2,1.2) {$c_\x$};
    \foreach \x in {2,6}
        \node[c_node, color=blue] (c\x) at (\x/2+1.0/2,1.2) {$c_\x$};

    \draw (c1)--(v1);
    \draw (c1)--(v2);
    \draw (c2)--(v1);
    \draw (c2)--(v2);
    
    \draw (c2)--(v3);
    
    \draw (c2)--(v4);
    \draw (c2)--(v6);
    \draw (c3)--(v4);
    \draw (c3)--(v5);
    \draw (c4)--(v4);
    \draw (c4)--(v5);
    \draw (c4)--(v6);
    
    \draw (c5)--(v6);
    \draw (c5)--(v7);
    \draw (c6)--(v6);
    \draw (c6)--(v7);
    
    \draw (c6)--(v8);

    \begin{scope}[shift={(2.25,-1.2)}]
    \node[blocknode,label=below:{$B_r$}] (R) at (0, 0) {
    \begin{tikzpicture}[scale=1.2]
        \foreach \x in {4,5,6}
            \node[v_node] (v\x) at (\x/2-3/2,0) {$v_\x$};
        \foreach \x in {2,3,4}
            \node[c_node] (c\x) at (\x/2-1/2,1.0) {$c_\x$};
        \draw (c2)--(v4);
        \draw (c2)--(v6);
        \draw (c3)--(v4);
        \draw (c3)--(v5);
        \draw (c4)--(v4);
        \draw (c4)--(v5);
        \draw (c4)--(v6);
    \end{tikzpicture}
    };
    
    \node[blocknode,label=below:{$B_1$}] (A) at (-1.5,-2) {
    \begin{tikzpicture}[scale=1.2]
        \foreach \x in {1,2}
            \node[v_node] (v\x) at (\x/2-1/2,0) {$v_\x$};
        \foreach \x in {1,2}
            \node[c_node] (c\x) at (\x/2-1/2,1.0) {$c_\x$};
            \draw (c1)--(v1);
            \draw (c1)--(v2);
            \draw (c2)--(v1);
            \draw (c2)--(v2);
    \end{tikzpicture}
    };
    \node[blocknode,label=below:{$B_2$}] (B) at (-0.5,-2) {
    \begin{tikzpicture}[scale=1.2]
        \foreach \x in {3}
            \node[v_node] (v\x) at (\x/2-3/2,0) {$v_\x$};
        \foreach \x in {2}
            \node[c_node] (c\x) at (\x/2-2/2,1.0) {$c_\x$};
            \draw (c2)--(v3);
    \end{tikzpicture}
    };
    \node[blocknode,label=below:{$B_3$}] (C) at (1,-2) {
    \begin{tikzpicture}[scale=1.2]
        \foreach \x in {6,7}
            \node[v_node] (v\x) at (\x/2-6/2,0) {$v_\x$};
        \foreach \x in {5,6}
            \node[c_node] (c\x) at (\x/2-5/2,1.0) {$c_\x$};
        \draw (c5)--(v6);
        \draw (c5)--(v7);
        \draw (c6)--(v6);
        \draw (c6)--(v7);
    \end{tikzpicture}
    };
    \node[blocknode,label=below:{$B_4$}] (D) at (2,-4) {
    \begin{tikzpicture}[scale=1.2]
        \foreach \x in {8}
            \node[v_node] (v\x) at (\x/2-8/2,0) {$v_\x$};
        \foreach \x in {6}
            \node[c_node] (c\x) at (\x/2-6/2,1.0) {$c_\x$};
        \draw (c6)--(v8);
    \end{tikzpicture}
    };
    
    \node[c_node, color=blue] (i) at (-1,-1.0) {$c_2$};
    \node[v_node, color=blue] (j) at (1,-1.0) {$v_6$};
    \node[c_node, color=blue] (k) at (1.5,-3.0) {$c_6$};
    
    \draw (R) -- (i) -- (A);
    \draw (i) -- (B);
    \draw (R) -- (j) -- (C) -- (k) -- (D);
    \end{scope}
    
    \end{tikzpicture}
    \caption{Example of the cluster decomposition 
    of the variable-induced subgraph of a stopping set. 
    On top, we show the subgraph $\mathcal{G}'$ of a stopping set 
    $\{v_1,\ldots,v_8\}$ that has no dangling checks. 
    Graph $\mathcal{G}'$ can be decomposed into five clusters 
    $B_r=\{c_2,c_3,c_4,v_4,v_5,v_6\}$, 
    $B_1=\{c_1,c_2,v_1,v_2\}$, 
    $B_2=\{c_2,v_3\}$, 
    $B_3=\{c_5,c_6,v_6,v_7\}$, and
    $B_4=\{c_6,v_8\}$.
    They are connected by three cut nodes $c_2$, $v_6$, and $c_6$, 
    highlighted in blue.
    At the bottom, we show the cluster tree of $\mathcal{G}'$ 
    rooted at its largest cluster.}
    \label{fig:cluster_tree}
\end{figure}

Following Lemma~\ref{lm:cluster_construction_complexity}, the construction of the cluster forest can be achieved with a time complexity of at most $O(2|\mathcal{V}|+2|\mathcal{C}|+|\mathcal{E}|)$ for a subgraph $\mathcal{G}'$ of the Tanner graph $\mathcal{G} = (\mathcal{V},\mathcal{C},\mathcal{E})$. 
For a CSS quantum LDPC code where the weights of $Z$-stabilizers 
in $H_1$ are bounded by a constant, 
the degrees of the check nodes in the Tanner graph $\mathcal{G}$
are bounded by the same constant. 
In such cases, the total number of edges $|\mathcal{E}|$ in $\mathcal{G}$ 
is upper bounded by the maximum check degree multiplied by 
the number of check nodes, making $|\mathcal{E}|$ linear in the block length.
Therefore, for a CSS quantum LDPC code, 
the time complexity of constructing the cluster forest, 
as the first step in our post-processing procedure, 
is linear in the block length of the code.

After the cluster decomposition, the problem of finding a solution 
in the residual erasure pattern can be divided into several 
subproblems of finding solutions for each of the clusters.
Next, we show how to construct solutions for the clusters sequentially, 
and then merge them into a single global solution on the cluster tree 
that matches the residual syndrome. 
It suffices to describe this procedure on a single cluster tree. 
In the more general case of a cluster forest, 
the same procedure can be applied to solve 
each of the isolated cluster trees independently.

\subsection{Consistency of Cluster Solutions}

For a cluster $B$ in the tree, let $\mathcal{V}(B)$ and 
$\mathcal{C}(B)$ denote the sets of variable nodes 
and check nodes contained within the cluster, respectively.
Check nodes in $\mathcal{C}(B)$ that are not cut nodes 
are referred to as \emph{internal checks}.
An assignment on the erased variable nodes within $B$, 
represented by a vector $\widehat{x}_{B}$ supported on 
$\mathcal{V}(B)$, is called a \emph{solution} of $B$ 
if it matches the residual syndrome $s'_x$ on the internal checks.
If $B$ contains a check node $c_j$ that is a cut node, 
we call the $j$-th syndrome bit of $\widehat{x}_{B}$ 
as its \emph{contribution} to the syndrome value on $c_j$.

For example, considering the cluster $B_3$ in 
Figure~\ref{fig:cluster_tree},
we have $\mathcal{V}(B_3) = \{v_6,v_7\}$ and 
$\mathcal{C}(B_3) = \{c_5,c_6\}$.
In cluster $B_3$, $c_5$ is an internal check and $c_6$ is a cut node.
A solution of $B_3$ will be a binary vector $\widehat{x}_{B_3}$ 
supported on the locations of $\{v_5,v_6\}$ 
that match the residual syndrome on $c_5$.
The contribution of $\widehat{x}_{B_3}$ 
to the syndrome value on $c_6$ equals
\begin{equation}
(\widehat{x}_{B_3})_5+(\widehat{x}_{B_3})_6\bmod 2.
\end{equation}

Given a cluster tree with a residual syndrome $s'_x$, 
we now describe the consistency conditions over cut nodes that a collection of cluster solutions must satisfy to be merged into a global solution. 
From now on, we refer to cut nodes that are variable nodes in the Tanner graph as \emph{variable cut nodes}, and cut nodes that are check nodes as \emph{check cut nodes}.

For a variable cut node $v_i$, let $v_i(B)$ denote its value in the solution of its adjacent cluster $B$. 
A collection of cluster solutions on the tree is said to be \emph{consistent} on $v_i$ if, for all its adjacent clusters 
$B_1, B_2, \ldots, B_s$, we have
\begin{equation}
v_i(B_1) = v_i(B_2) = \ldots = v_i(B_s).
\label{eq:v_cut_node_constraint}
\end{equation}
In other words, the values of $v_i$ in the solutions of all its neighboring clusters must be equal.

For a check cut node $c_j$, let $c_j(B)$ denote the contribution from the solution of its adjacent cluster $B$ to the syndrome value on $c_j$. 
A collection of cluster solutions is \emph{consistent} on $c_j$ if, for all its adjacent clusters $B_1, B_2, \ldots, B_s$, we have
\begin{equation}
(s'_x)_j = c_j(B_1) + c_j(B_2) + \ldots + c_j(B_s) \bmod 2.
\label{eq:c_cut_node_constraint}
\end{equation}
In other words, the sum of contributions from all its neighboring clusters to the syndrome value on $c_j$ must equal the residual syndrome bit $(s'_x)_j$ on $c_j$.

A collection of cluster solutions on the tree is called \emph{globally consistent} if it is \emph{consistent} on every cut node. 
In this case, the cluster solutions can be merged into a global solution that matches the residual syndrome $s'_x$.

\subsection{Constructing Cluster Solutions}
We now show how to construct cluster solutions 
that are globally consistent. 
The algorithm consists of two stages. 
In the first stage, we compute solutions for each cluster, 
starting from the leaves and progressing up to the root, 
while propagating constraints along the edges of the tree. 
For a cluster $B$ with $n'$ variable nodes and $m'$ check nodes, 
one of its solutions can be computed in time $O(n'm'\min(n',m'))$.
with Gaussian Elimination. 
During this process, each cluster collects constraints 
passed from its child cut nodes if any, computes its solutions, 
and summarizes the constraint it needs 
to propagate upwards to its parent cut node.  
Similarly, each cut node collects and summarizes constraints 
from its child clusters and passes the resulting constraint 
to its parent cluster. 
Here we describe this process in detail.

Consider a non-root cluster $B$ with a variable cut node $v_i$ as its parent. 
After collecting constraints from its child cut nodes if any,
we compute two potential solutions for $B$, which are stored for later use: 
one with $v_i(B) = 0$ and another with $v_i(B) = 1$. 
If both solutions exist, $B$ is labeled \emph{free} for its parent $v_i$, 
meaning that regardless of the value assigned to $v_i$, 
there is a corresponding solution in $B$. 
If only one solution exists, for example, when $v_i(B) = 0$, 
then $B$ is labeled \emph{frozen} and it propagates the constraint 
$v_i = 0$ upwards to $v_i$. 
We adopt the terms \emph{free} and \emph{frozen} from 
\cite{connolly2024fast} to describe the type of 
constraints clusters propagate to their parent cut nodes.

Similarly, consider a non-root cluster $B$ with a check cut node $c_j$ as its parent. 
After collecting constraints from its child cut nodes if any, 
we compute two potential solutions for $B$, which are stored for later use: 
one with $c_j(B) = 0$ and another with $c_j(B) = 1$. 
If both solutions exist, $B$ is labeled \emph{free} for $c_j$, 
meaning that, regardless of the contribution required for the syndrome on $c_j$, 
a corresponding solution exists in $B$. 
If only one solution exists, for instance, when $c_j(B) = 0$, 
$B$ is labeled \emph{frozen} and its only possible contribution, 
$c_j(B) = 0$, is propagated upwards to $c_j$.

Next, we describe how constraints are collected and summarized by the cut nodes. 
For a variable cut node $v_i$, after solutions for all its child clusters 
have been computed, it gathers their constraints as follows. 
If $v_i$ has a frozen child cluster that sends a constraint, 
such as $v_i = 0$, then following (\ref{eq:v_cut_node_constraint}),
$v_i$ is labeled \emph{frozen} for its parent cluster $B$ 
and propagates the same constraint upwards. 
If all child clusters are labeled \emph{free}, 
$v_i$ is also labeled \emph{free} for its parent $B$.
We note that if $v_i$ receives more than one constraint 
from its frozen child clusters, they must be consistent.
Since otherwise it means that the space of solutions consistent on $v_i$ is empty, while we know for the erasure syndrome decoding problem, 
at least one globally consistent solution exists.

For a check cut node $c_j$, if any of its child clusters are labeled \emph{free} 
regarding their possible contributions to the syndrome on $c_j$, 
then the required contribution from its parent cluster $B$ is also labeled \emph{free}.
Otherwise, when all its child clusters $B_{1},\ldots,B_{s}$ have \emph{frozen} 
contributions, to match the residual syndrome on $c_j$, 
the contribution required from its parent $B$ can be computed from 
(\ref{eq:c_cut_node_constraint}) as
\begin{equation}
c_j(B) = (s'_x)_i - c_j(B_{1}) - \cdots - c_j(B_{s}) \bmod 2. 
\end{equation}
In this case, $c_j$ is labeled \emph{frozen} and propagates
its required contribution to its parent $B$.

Now we describe how a cluster $B$ collects constraints 
from its child cut nodes before computing its solutions.
There are four possible cases:
\begin{enumerate}
    \item If $B$ has a frozen variable cut node $v_i$ as a child, 
    the received constraint, such as $v_i(B)=0$, 
    is incorporated when computing the solutions for $B$.
    \item If $B$ has a free variable cut node $v_i$ as a child, 
    no additional constraint is applied.
    \item If $B$ has a frozen check cut node $c_j$ as a child, 
    the required contribution, such as $c_j(B) = 0$, 
    is incorporated when computing the solutions for $B$.
    \item If $B$ has a free check cut node $c_j$ as a child, 
    the contribution from $B$ to $c_j$ can be arbitrary, 
    and no additional constraint is applied.
\end{enumerate}

A recursive implementation that collects the incoming constraints from its children and computes the solutions for a cluster $B$ is provided in Algorithm~\ref{alg:recursive_compute}.

\begin{algorithm}[t]
\caption{RecursiveCompute}
\label{alg:recursive_compute}
\KwIn{Cluster $B$, residual syndrome $s'_x$}
$L\leftarrow$ list of additional constraints\\
\tcp{Collect constraints, if any}
\For{$v_i$ in the child variable cut nodes of $B$}{
    \For{$B'$ in the child clusters of $v_i$}{
        RecursiveCompute($B'$, $s'_x$)\\
    }
    \If{exist a frozen child cluster $B'$}{
        add $v_i(B)=v_i(B')$ to $L$
    }
}
\For{$c_j$ in the child check cut nodes of $B$}{
    \For{$B'$ in the child clusters of $c_j$}{
        RecursiveCompute($B'$, $s'_x$)\\
    }
    \If{all child clusters $B_1,\ldots,B_s$ are frozen}{
        add $c_j(B)=(s_x')_j-c_j(B_1)-\ldots-c_j(B_s)\bmod 2$ to $L$
    }
}
\tcp{Compute cluster solutions}
\eIf{$B$ is not a root cluster}{
    $n\leftarrow$ parent cut node of $B$\\
    Compute and store two solutions of $B$ with $L$, one with $n(B)=0$, and the other with $n(B) = 1$\\
    \eIf{both solutions exist}{
        $B$ labeled \emph{free} for parent $n$
    }{
        $B$ labeled \emph{frozen} for parent $n$
    }
}{
    Compute one solution of $B$ with $L$
}
\end{algorithm}

After this process reaches the root cluster $B_r$, 
we compute a single solution for $B_r$ incorporating 
all received constraints from its child cut nodes. 
At this point, every frozen cluster has one solution stored, 
and every free cluster has two solutions stored.
In the second stage of our algorithm, we select a solution 
for each cluster by propagating constraints backward from root to leaves.

For a cluster $B$ in the tree, if it is the root cluster 
or a \emph{frozen} cluster, there is only one solution to select. 
If $B$ is a free cluster with two stored solutions, 
we choose the solution consistent with the constraint 
propagated downward from its parent cut node. Specifically,
if $B$ has a variable cut node $v_i$ as its parent, 
we select the solution whose value on $v_i$
is consistent with the received constraint;
if $B$ has a check cut node $c_j$ as its parent, 
we select the solution consistent with the required contribution. 
Afterward, for each child variable cut node $v_k$, 
$B$ propagates the constraint $v_k = v_k(B)$ 
of its selected solution downward to $v_k$. 
Similarly, for each child check cut node $c_k$, 
$B$ propagates the contribution $c_k(B)$ of its solution downward to $c_k$.

Next, we describe how constraints are propagated and resolved 
through the cut nodes. 
For a variable cut node $v_i$, after receiving the constraint, 
such as $v_i=0$, from its parent cluster $B$, 
it simply propagates the same constraints downwards 
for each of its child clusters.
The situation for check cut nodes is a bit more subtle.
For a check cut node $c_j$, 
let $B_1,\ldots,B_t$ denote its child clusters with free contributions, 
and let $B^\ast_1,\dots,B^\ast_s$ denotes its child clusters with frozen contributions.
After receiving the contribution $c_j(B)$ from its parent cluster $B$, 
we need to assign a required contribution 
for each of its free clusters to match the residual syndrome on $c_j$.
First, we compute the sum of the required contributions
from all its free child clusters, denoted by $\Delta s$, 
following (\ref{eq:c_cut_node_constraint}) as
\begin{equation}
    \Delta s_j = (s'_x)_j - c_j(B) - c_j(B^\ast_1) - \ldots - c_j(B^\ast_s)
    \bmod 2
\end{equation}
Then, to match the residual syndrome on $c_j$, 
we need a contribution assignment with
\begin{equation}
    \Delta s_j = c_j(B_1) + \ldots + c_j(B_s) \bmod 2,
\end{equation}
There may be more than one assignment that satisfies the requirement. 
In our implementation, we simply assign the required contribution to the first free cluster $B_1$ to be $c_j(B_1) = \Delta s_j$, with zero contribution assigned to the rest of the free clusters.
A recursive implementation that selects a solution for a cluster $B$ and propagates its constraints downwards to its children is provided in Algorithm~\ref{alg:recursive_select}.

\begin{algorithm}[t]
\caption{RecursiveSelect}
\label{alg:recursive_select}
\KwIn{Cluster $B$ with parent cut node $n$, constraint $n(B) = b$, residual syndrome $s'_x$}
\tcp{Select a cluster solution}
\eIf{$B$ is frozen or it is the root cluster}{
    $B$ has only one solution to select from
}{
    $B$ has two stored solutions, select the one consistent with 
    $n(B) = b$
}
\tcp{Propagate constraints}
\For{$v_i$ in the child variable cut nodes of $B$}{
    \For{$B'$ in the child clusters of $v_i$}{
        RecursiveSelect($B'$, $v_i(B') = v_i(B)$, $s'_x$)\\
    }
}
\For{$c_j$ in the child check cut nodes of $B$}{
    $B^\ast_1,\ldots,B^\ast_s\leftarrow$ frozen child clusters of $c_j$\\
    $B_1,\ldots,B_t\leftarrow$ free child clusters of $c_j$\\
    Compute $\Delta s_j = (s'_x)_j - c_j(B) - c_j(B^\ast_1) - \ldots - c_j(B^\ast_s)\bmod 2$\\
    \tcp{Frozen child clusters}
    \For{$B'$ in \{$B^\ast_2,\ldots,B^\ast_s$\}}{
        RecursiveSelect($B'$, $c_j(B')$ is frozen, $s'_x$)
    }
    \tcp{Free child clusters}
    RecursiveSelect($B_1$, $c_j(B_1) = \Delta s_j$, $s'_x$)\\
    \For{$B'$ in \{$B_2,\ldots,B_t$\}}{
        RecursiveSelect($B'$, $c_j(B_1) = 0$, $s'_x$)
    }
}
\end{algorithm}

Once this process in the second stage reaches the leaves of the cluster tree, a globally consistent solution will have been selected for each cluster. 
The cluster solutions are then merged into a global solution $\widehat{x}'$
that satisfies the residual syndrome, such that $\widehat{x}'H_1^T = s'_x$. 
The pseudocode summarizing the entire cluster decomposition post-processing procedure  
is provided in Algorithm~\ref{alg:cluster_decomp}. 

\begin{algorithm}[t]
\caption{Cluster decomposition post-processing}
\label{alg:cluster_decomp}
\KwIn{
block length $n$, 
residual Tanner graph $\mathcal{G}'$, 
residual syndrome $s'_x$}
\KwOut{$\widehat{x}'$}
\tcp{Cluster decomposition}
Decompose $\mathcal{G}'$ into a cluster forest with Hopcroft-Tarjan's algorithm\\
\tcp{Construct cluster solutions}
\For{each cluster tree in the forest}{
    $B_r\leftarrow$ root cluster\\
    RecursiveCompute($B_r$, $s'_x$)\\
    RecursiveSelect($B_r$, $s'_x$)
}
Merge cluster solutions into a global solution $\widehat{x}'$\\
\Return{$\widehat{x}'$}
\end{algorithm}

\subsection{Complexity of Cluster Decomposition Post-processing}

With cluster decomposition, we effectively achieve ML decoding by resolving all remaining erasures after peeling, with reduced complexity compared to directly applying Gaussian Elimination to the entire stopping set. 
Now, we show that the complexity of cluster decomposition postprocessing, as described in Algorithm 4, is dominated by the sum of the cubics of cluster size.

\begin{theorem}
Assume the Tanner graph $\mathcal{G} = (\mathcal{V},\mathcal{C},\mathcal{E})$ has a number of edges $|\mathcal{E}|$ that is linear in $n$.
Let $B_1,B_2,\ldots,B_t$ denote all the clusters obtained after the cluster decomposition of the 
residual Tanner graph $\mathcal{G}'$. For each cluster $B_i$, let $m_i$ and $n_i$ denote the number of check nodes and variable nodes contained in $B_i$ for $i = 1,2,\ldots,t$. Then, the time complexity of Algorithm~\ref{alg:cluster_decomp} is bounded by 
\begin{equation}
    O\left(\sum_{i=1}^t n_im_i\min\{n_i,m_i\}\right) + O(n)
\end{equation}
\label{thm:complexity}
\end{theorem}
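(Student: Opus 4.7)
The plan is to bound each of the three phases of Algorithm~\ref{alg:cluster_decomp} separately and show that the per-cluster Gaussian Elimination dominates all other per-cluster work. I would first invoke Lemma~\ref{lm:cluster_construction_complexity} to bound the initial cluster-decomposition phase. Since the residual Tanner graph $\mathcal{G}'$ is a subgraph of $\mathcal{G}$ and the assumption gives $|\mathcal{E}|=O(n)$ (which also implies $|\mathcal{C}|=O(n)$ when stabilizer weights are bounded), running Hopcroft--Tarjan on $\mathcal{G}'$ and assembling the cluster forest costs $O(n)$, contributing the additive $O(n)$ term.

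Next I would analyze the two recursive passes. RecursiveCompute and RecursiveSelect each visit every cluster and every cut node exactly once. For a cluster $B_i$ with $n_i$ erased variable nodes and $m_i$ check nodes, the only nontrivial operation is solving up to two linear systems over the $m_i\times n_i$ parity-check submatrix, with at most a constant number of extra rows coming from the list $L$ of propagated equality constraints (each child cut node contributes at most one). Gaussian Elimination on such a system costs $O(n_i m_i \min\{n_i,m_i\})$, and this bound absorbs the $O(n_i+m_i)$ work of assembling $L$, storing the two candidate solutions, and, in RecursiveSelect, computing $\Delta s_j$ and selecting the correct stored solution. Summing over all clusters gives the main term $O\!\left(\sum_{i=1}^t n_i m_i\min\{n_i,m_i\}\right)$.

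For the remaining bookkeeping, I would argue separately that the total cost of constraint propagation and the final merging of cluster solutions into $\widehat{x}'$ is $O(n)$. The key observation is that in the biconnected-component decomposition, each edge of $\mathcal{G}'$ belongs to exactly one cluster; hence $\sum_i |E(B_i)| \le |\mathcal{E}'| = O(n)$, and the total (cluster, node) incidences are likewise $O(n)$, since a non-cut node sits in a single cluster and a cut node sits in a number of clusters bounded by its degree. This ensures both that the recursion traverses $O(n)$ tree edges in the cluster forest and that writing each coordinate of $\widehat{x}'$ once when merging costs $O(n)$ in total.

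The main obstacle I anticipate is formalizing that the propagated constraints (the list $L$ in Algorithm~\ref{alg:recursive_compute} and the assignment rules in Algorithm~\ref{alg:recursive_select}) do not inflate the linear-algebra cost per cluster beyond $O(n_i m_i \min\{n_i,m_i\})$. The clean way to handle this is to note that each such constraint adds at most one row or fixes at most one variable in the cluster's linear system, so augmenting the system with the $O(1)$-per-child-cut-node constraints multiplies the dimensions by at most a constant factor and is absorbed in the Gaussian Elimination bound; the per-cluster linear overhead for gathering and forwarding these constraints is dominated either by the Gaussian Elimination term or by the global $O(n)$ bound on tree traversal. Adding the three contributions yields the stated complexity.
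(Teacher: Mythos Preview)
Your proposal is correct and follows essentially the same approach as the paper: bound the forest construction by $O(n)$ via Lemma~\ref{lm:cluster_construction_complexity}, charge each cluster the Gaussian Elimination cost $O(n_i m_i \min\{n_i,m_i\})$ for the two solves in RecursiveCompute, and argue that all remaining constraint-propagation and merging work across the cluster forest is $O(n)$. If anything, your version is more careful than the paper's, which simply asserts the linear bookkeeping bound without invoking the edge-partition property of biconnected components or addressing the effect of the extra constraint rows on the per-cluster linear algebra.
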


\begin{proof}
Given the Tanner graph $\mathcal{G}$ with the total number of edges $|\mathcal{E}|$ linear in the block length $n$, 
the first step in Algorithm~\ref{alg:cluster_decomp}, which constructs the cluster forest, 
takes a time complexity of $O(n)$ with Hopcroft-Tarjan's algorithm following Lemma~\ref{lm:cluster_construction_complexity}.

Then, the algorithm recursively computes the cluster solutions from the leaves to the root of each cluster tree. 
This procedure involves two calls to compute the solution for each cluster $B_i$ using Gaussian Elimination, with a time complexity of $O(n_i m_i \min\{n_i, m_i\})$ for each cluster $B_i$.
In addition, the algorithm performs a linear number of operations on the cluster forest, propagating the constraint information upwards from the leaves to the root. 

After that, the constraint information is recursively propagated downwards from the root to the leaves to select solutions for all clusters that are globally consistent. 
This step also requires a linear number of operations on the cluster forest.

Therefore, the overall time complexity of the cluster decomposition post-processing given in Algorithm~\ref{alg:cluster_decomp} is
\begin{equation}
    O\left(\sum_{i=1}^t n_im_i\min\{n_i,m_i\}\right) + O(n)
\end{equation}
\end{proof}

Define the number of variable nodes in a cluster $B_i$ as its cluster size, denoted by $s(B_i)$. For a Tanner graph with bounded variable node degrees, following the above theorem, the time complexity of the cluster decomposition post-processing can be expressed as 
\begin{equation}
O\left(\sum_{i=1}^t s(B_i)^3\right) + O(n).
\label{eq:complexity}
\end{equation}
This complexity is lower than directly applying Gaussian Elimination. However, the worst-case complexity remains $O(n^3)$, similar to that of the Gaussian Elimination decoder, which occurs when there are clusters with sizes linear in $n$.

\subsection{Cluster Decomposition with Constrained Cluster Sizes}

To improve the worst-case complexity of the cluster decoder, we can modify the cluster decomposition by imposing a size constraint on the clusters. 
The process is as follows:
After constructing the cluster forest for the residual Tanner graph, we check whether any cluster exceeds a predefined size $C$. 
If all the obtained clusters have sizes less than or equal to $C$, we proceed to construct the cluster solutions recursively. 
However, if any cluster has a size larger than $C$, we terminate the decoding process and directly call it a failure. 
This modified cluster decomposition post-processing procedure is summarized in Algorithm~\ref{alg:cluster_decomp_constrained}.

\begin{algorithm}[t]
\caption{Cluster decomposition post-processing with constrained cluster sizes}
\label{alg:cluster_decomp_constrained}
\KwIn{
block length $n$, 
residual Tanner graph $\mathcal{G}'$, 
residual syndrome $s'_x$,
cluster size constraint $C$}
\KwOut{$\widehat{x}'$}
\tcp{Cluster decomposition}
Decompose $\mathcal{G}'$ into a cluster forest with Hopcroft-Tarjan's algorithm\\
\tcp{Check the cluster sizes}
\If{exists a cluster with size larger than $C$}{
\Return{decoding failure}
}
\tcp{Construct cluster solutions}
\For{each cluster tree in the forest}{
    $B_r\leftarrow$ root cluster\\
    RecursiveCompute($B_r$, $s'_x$)\\
    RecursiveSelect($B_r$, $s'_x$)
}
Merge cluster solutions into a global solution $\widehat{x}'$\\
\Return{$\widehat{x}'$}
\end{algorithm}

Next, we show that the worst-case complexity of our post-processing procedure becomes linear if we impose a constant size constraint $C$ on the clusters.
\begin{theorem}
Assume that the Tanner graph $\mathcal{G} = (\mathcal{V}, \mathcal{C}, \mathcal{E})$ has a number of edges $|\mathcal{E}|$ that is linear in $n$. If we impose a size constraint $C$ on the clusters, then the time complexity of Algorithm~\ref{alg:cluster_decomp_constrained} becomes $O(nC^2)+O(n)$.
\label{thm2}
\end{theorem}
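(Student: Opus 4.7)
The plan is to refine the bound from Theorem~\ref{thm:complexity} by exploiting the cluster-size cap $s(B_i) \le C$ together with the bounded-degree nature of LDPC Tanner graphs, and to account for the extra size check introduced in Algorithm~\ref{alg:cluster_decomp_constrained}. From Theorem~\ref{thm:complexity} we already know that the Hopcroft--Tarjan decomposition and the tree traversals contribute $O(n)$, so the only quantity that needs re-examination is the Gaussian elimination sum $\sum_i n_i m_i \min\{n_i,m_i\}$; the size-check step itself is a single scan over the cluster forest and costs $O(\sum_i n_i)$, which I will show is $O(n)$.

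First I would bound each factor in the Gaussian-elimination term. The size constraint gives $n_i \le C$ and therefore $\min\{n_i, m_i\} \le C$. To bound $m_i$, I would use that LDPC codes have bounded variable degree $d_v$: every check node of the biconnected component $B_i$ must have at least one incident edge ending at a variable node of $B_i$, so
\begin{equation}
m_i \;\le\; \sum_{v \in \mathcal{V}(B_i)} \deg(v) \;\le\; d_v\, n_i \;\le\; d_v\, C,
\end{equation}
which gives $m_i = O(C)$ and hence $n_i m_i \min\{n_i, m_i\} = O(C^2)\cdot n_i$ for every cluster.

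The key step, and the one I expect to be the main obstacle, is showing that $\sum_i n_i = O(n)$ despite the fact that each variable cut node is counted once in every cluster it belongs to. The observation I would use is that in the block-cut decomposition each edge of the residual Tanner graph lies in exactly one biconnected component, so a vertex $v$ can belong to at most $\deg(v)$ clusters. Letting $k_v$ denote the number of clusters containing $v$,
\begin{equation}
\sum_{i=1}^{t} n_i \;=\; \sum_{v} k_v \;\le\; \sum_{v} \deg(v) \;=\; 2|\mathcal{E}| \;=\; O(n),
\end{equation}
where the last equality uses the hypothesis $|\mathcal{E}| = O(n)$. Combining with the previous paragraph, the total Gaussian-elimination work over all clusters is $\sum_i n_i m_i \min\{n_i, m_i\} \le O(C^2) \sum_i n_i = O(nC^2)$.

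Finally I would assemble the pieces: $O(n)$ for the Hopcroft--Tarjan cluster decomposition by Lemma~\ref{lm:cluster_construction_complexity}, $O(n)$ for the size check, $O(n)$ for the upward and downward recursions of \textbf{RecursiveCompute} and \textbf{RecursiveSelect} (since the cluster forest has $O(n)$ nodes and $O(n)$ edges by the bound above), and $O(nC^2)$ for the Gaussian eliminations across all clusters, yielding the claimed $O(nC^2) + O(n)$. Without the bounded-degree assumption the sum-of-sizes inequality would fail and the argument would break; every other step is routine once Theorem~\ref{thm:complexity} is in hand.
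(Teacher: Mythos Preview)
Your proof is correct and follows essentially the same route as the paper: invoke Theorem~\ref{thm:complexity} for the $O(n)$ pieces and then bound the Gaussian-elimination sum using the cap $s(B_i)\le C$. The paper argues the last step more tersely by saying the worst case has $O(n/C)$ clusters each of size $C$, giving $(n/C)\cdot C^3=nC^2$; your version is more explicit in proving $\sum_i n_i\le 2|\mathcal{E}|=O(n)$ via the edge-partition property of biconnected components, which is exactly the fact the paper's ``$O(n/C)$ clusters'' claim is implicitly using.
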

\begin{proof}
Following the proof of Theorem~\ref{thm:complexity}, the time complexity for constructing the cluster forest is linear in the block length $n$. 
Checking whether any cluster has a size greater than $C$ also requires linear time. 
Therefore, the overall complexity of the algorithm, in the case there exists a cluster with a size larger than $C$, is $O(n)$.

If all clusters have sizes less than or equal to $C$, then the time complexity of the post-processing procedure is maximized when all clusters have the largest size possible. 
In this case we have $O(n/C)$ clusters, and the complexity in equation (\ref{eq:complexity}) becomes:
\begin{align}
&O\left(\sum_{i=1}^t s(B_i)^3\right) + O(n)\\
=\ & O\left(\frac{n}{C}\cdot C^3\right) + O(n)\\
=\ & O\left(nC^2\right) + O(n)\label{eq:nC^2}
\end{align}
\end{proof}

\begin{corollary}
Assume that the Tanner graph $\mathcal{G} = (\mathcal{V}, \mathcal{C}, \mathcal{E})$ has a number of edges $|\mathcal{E}|$ that is linear in $n$. If we impose a constant size constraint $C$ on the clusters, then the time complexity of Algorithm~\ref{alg:cluster_decomp_constrained} becomes $O(n)$. 
\end{corollary}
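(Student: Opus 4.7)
The plan is to derive this corollary as an immediate consequence of Theorem~\ref{thm2}. Theorem~\ref{thm2} already establishes that, under the assumption that $|\mathcal{E}|$ is linear in $n$, imposing a size constraint $C$ on the clusters yields a worst-case running time of $O(nC^2) + O(n)$ for Algorithm~\ref{alg:cluster_decomp_constrained}. The only additional observation needed is that when $C$ is treated as a constant (independent of $n$), the factor $C^2$ is absorbed into the hidden constant of the big-$O$ notation.

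Concretely, I would write: fix $C$ to be a constant. Then $C^2$ is also a constant, so $O(nC^2) = O(n)$, and the bound from Theorem~\ref{thm2} simplifies to $O(n) + O(n) = O(n)$. No separate case analysis is needed, since the bound in Theorem~\ref{thm2} already covers both outcomes of the size check in Algorithm~\ref{alg:cluster_decomp_constrained}: the branch that terminates immediately upon finding an oversize cluster runs in $O(n)$ by the cluster-forest construction cost given in Lemma~\ref{lm:cluster_construction_complexity}, and the branch that proceeds to RecursiveCompute and RecursiveSelect is bounded by $O(nC^2) + O(n)$ as shown in the proof of Theorem~\ref{thm2}.

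There is no real obstacle here; the statement is a direct specialization of Theorem~\ref{thm2}. The only minor thing to be careful about is making explicit that the assumption ``$C$ is a constant'' is what allows $nC^2$ to collapse to $O(n)$, which distinguishes this corollary from the more general bound in Theorem~\ref{thm2} where $C$ could in principle grow with $n$.
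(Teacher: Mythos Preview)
Your proposal is correct and matches the paper's approach: the corollary is stated without a separate proof in the paper, precisely because it follows immediately from Theorem~\ref{thm2} by observing that $O(nC^2)=O(n)$ when $C$ is a constant. Your write-up makes this explicit and is entirely appropriate.
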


We note that if the cluster size constraint $C$ varies as a function of the block length $n$, the decoding complexity can range between $O(n^3)$ and $O(n)$.
For example, setting $C = O(\sqrt{n})$ results in the complexity of the cluster decoder being $O(n^2)$, following Theorem~\ref{thm2}. 
This matches the complexity of the VH decoder \cite{connolly2024fast} for hypergraph product codes.

In the case of $C=O(\sqrt{n})$, the cluster decomposition with a cluster size constraint $C$ performs strictly better than the VH decomposition\footnote{Here, we refer to the standard VH decomposition with a worst-case complexity of $O(n^2)$. Note that at the end of \cite{connolly2024fast}, the authors discuss an improved version of VH capable of solving certain stopping sets with cycles in the VH graph, but at the cost of increasing the worst-case complexity to $O(n^{2.5})$.} 
on hypergraph product codes.
This is because all VH graphs that can be solved by the VH decomposition correspond to trees of horizontal and vertical clusters with sizes less than or equal to $C=O(\sqrt{n})$. 
Therefore, any stopping set solvable by VH is also solvable by a cluster decomposition with $C=O(\sqrt{n})$.
Thus, the cluster decomposition post-processing can be viewed as a generalization of VH, offering a wide range of complexity based on the choice of $C$ and extending its use to arbitrary quantum LDPC codes.

\section{Simulation Results}
\subsection{Cluster Decoder Performance}
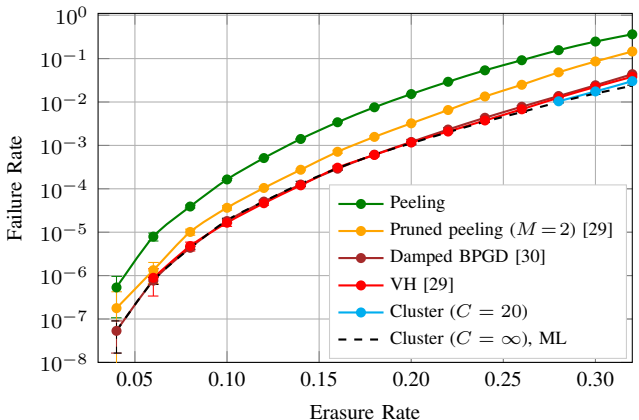
\begin{figure}[t]
\centering
\scalebox{1.0}{\begin{tikzpicture}

\definecolor{brown}{RGB}{165,42,42}
\definecolor{darkgray176}{RGB}{176,176,176}
\definecolor{green}{RGB}{0,128,0}
\definecolor{lightgray204}{RGB}{204,204,204}
\definecolor{lightgreen}{RGB}{144,238,144}
\definecolor{orange}{RGB}{255,165,0}
\definecolor{purple}{RGB}{128,0,128}
\definecolor{yellow}{RGB}{255,255,0}

\begin{axis}[
legend cell align={left},
legend style={
  fill opacity=0.7,
  draw opacity=1,
  text opacity=1,
  at={(0.99,0.01)},
  anchor=south east,
  draw=lightgray204,
  font=\scriptsize
},
width=3.8in,
height=2.7in,
xlabel={Erasure Rate},
ylabel={Failure Rate},
grid=both,
ticklabel style = {font=\footnotesize},
label style = {font=\footnotesize},
x grid style={darkgray176},
xmin=0.03, xmax=0.32,
xtick={0.00,0.05,0.10,0.15,0.20,0.25,0.30,0.35,0.40,0.45,0.50},
xticklabels={0.00,0.05,0.10,0.15,0.20,0.25,0.30,0.35,0.40,0.45,0.50},
ymode=log,
y grid style={darkgray176},
ymin=1e-8, ymax=1.10,
ytick={1,1e-1,1e-2,1e-3,1e-4,1e-05,1e-06,1e-07,1e-08},
scale=0.88
]

\addplot+[
  thick, green, mark=*, mark options={scale=0.75},
  smooth, 
  error bars/.cd, 
    y dir=both, 
    y explicit
] table [x=x, y=y, y error plus=error1, y error minus=error2, col sep=comma] {
   x,            y,       error1,       error2
0.04, 0.0000005333, 0.0000004268, 0.0000004267
0.06, 0.0000079111, 0.0000016436, 0.0000016435
0.08, 0.0000390667, 0.0000031630, 0.0000031629
0.10, 0.0001642667, 0.0000091720, 0.0000091719
0.12, 0.0005108000, 0.0000161711, 0.0000161710
0.14, 0.0014060000, 0.0000268171, 0.0000268170
0.16, 0.0034160000, 0.0001320509, 0.0001320508
0.18, 0.0075600000, 0.0001960372, 0.0001960371
0.20, 0.0152386667, 0.0002772454, 0.0002772453
0.22, 0.0292440000, 0.0003813282, 0.0003813281
0.24, 0.0537000000, 0.0011408065, 0.0011408064
0.26, 0.0917400000, 0.0014608139, 0.0014608138
0.28, 0.1565333333, 0.0058149746, 0.0058149745
0.30, 0.2472000000, 0.0069035819, 0.0069035818
0.32, 0.3628666667, 0.0076948320, 0.0076948319
};
\addlegendentry{Peeling}

\addplot+[
  thick, orange, mark=*, mark options={scale=0.75},
  smooth, 
  error bars/.cd, 
    y dir=both, 
    y explicit
] table [x=x, y=y, y error plus=error1, y error minus=error2, col sep=comma] {
   x,            y,       error1,       error2
0.04, 0.0000001778, 0.0000002464, 0.0000001777
0.06, 0.0000013333, 0.0000006748, 0.0000006747
0.08, 0.0000101333, 0.0000016110, 0.0000016109
0.10, 0.0000365333, 0.0000043258, 0.0000043257
0.12, 0.0001045333, 0.0000073170, 0.0000073169
0.14, 0.0002746667, 0.0000118596, 0.0000118595
0.16, 0.0007133333, 0.0000604250, 0.0000604249
0.18, 0.0015706667, 0.0000896244, 0.0000896243
0.20, 0.0032253333, 0.0001283250, 0.0001283249
0.22, 0.0065333333, 0.0001823348, 0.0001823347
0.24, 0.0134133333, 0.0005821654, 0.0005821653
0.26, 0.0249266667, 0.0007889713, 0.0007889712
0.28, 0.0483333333, 0.0034322289, 0.0034322288
0.30, 0.0859333333, 0.0044851834, 0.0044851833
0.32, 0.1456000000, 0.0056444515, 0.0056444514
};
\addlegendentry{Pruned peeling ($M\!=\!2$) \cite{connolly2024fast}}

\addplot+[
  thick, brown, mark=*, mark options={scale=0.75},
  smooth, 
  error bars/.cd, 
    y dir=both, 
    y explicit
] table [x=x, y=y, y error plus=error1, y error minus=error2, col sep=comma] {
   x,            y,       error1,       error2
0.04, 0.0000000533, 0.0000000370, 0.0000000369
0.06, 0.0000007667, 0.0000001401, 0.0000001400
0.08, 0.0000043667, 0.0000003344, 0.0000003343
0.10, 0.0000180667, 0.0000021510, 0.0000021509
0.12, 0.0000507333, 0.0000036045, 0.0000036044
0.14, 0.0001252667, 0.0000056637, 0.0000056636
0.16, 0.0002908000, 0.0000086287, 0.0000086286
0.18, 0.0006116216, 0.0000125960, 0.0000125959
0.20, 0.0012000000, 0.0001752025, 0.0001752024
0.22, 0.0023200000, 0.0002434726, 0.0002434725
0.24, 0.0043333333, 0.0003324134, 0.0003324133
0.26, 0.0077866667, 0.0004448246, 0.0004448245
0.28, 0.0137333333, 0.0005889733, 0.0005889732
0.30, 0.0242733333, 0.0007788238, 0.0007788237
0.32, 0.0437466667, 0.0010350692, 0.0010350691
};
\addlegendentry{Damped BPGD \cite{gokduman2024erasure}}

\addplot+[
  thick, red, mark=*, mark options={scale=0.75},
  smooth, 
  error bars/.cd, 
    y dir=both, 
    y explicit
] table [x=x, y=y, y error plus=error1, y error minus=error2, col sep=comma] {
   x,            y,       error1,       error2
0.06, 0.0000008889, 0.0000005509, 0.0000005508
0.08, 0.0000048000, 0.0000011087, 0.0000011086
0.10, 0.0000165333, 0.0000029101, 0.0000029100
0.12, 0.0000465333, 0.0000048820, 0.0000048819
0.14, 0.0001202667, 0.0000078482, 0.0000078481
0.16, 0.0003066667, 0.0000396271, 0.0000396270
0.18, 0.0006026667, 0.0000555435, 0.0000555434
0.20, 0.0011640000, 0.0000771701, 0.0000771700
0.22, 0.0020866667, 0.0001032758, 0.0001032757
0.24, 0.0037533333, 0.0003094585, 0.0003094584
0.26, 0.0067866667, 0.0004154894, 0.0004154893
0.28, 0.0126000000, 0.0017850167, 0.0017850166
0.30, 0.0223333333, 0.0023647358, 0.0023647357
0.32, 0.0382000000, 0.0030674996, 0.0030674995
};
\addlegendentry{VH \cite{connolly2024fast}}

\addplot+[
  thick, cyan, mark=*, mark options={scale=0.75},
  smooth, 
  error bars/.cd, 
    y dir=both, 
    y explicit
] table [x=x, y=y, y error plus=error1, y error minus=error2, col sep=comma] {
   x,            y,       error1,       error2
0.28, 0.0103800000, 0.0005129126, 0.0005129125
0.30, 0.0175266667, 0.0006640801, 0.0006640800
0.32, 0.0302266667, 0.0008664444, 0.0008664443
0.34, 0.0546666667, 0.0036380128, 0.0036380127
0.36, 0.1141333333, 0.0050886300, 0.0050886299
0.38, 0.2643333333, 0.0070571110, 0.0070571109
0.40, 0.6251333333, 0.0077470290, 0.0077470289
0.42, 0.9426666667, 0.0037204298, 0.0037204297
0.44, 0.9984666667, 0.0006261747, 0.0006261746
};
\addlegendentry{Cluster ($C = 20$)}

\addplot+[
  thick, black, dashed, mark=., mark options={scale=0.75},
  smooth, 
  error bars/.cd, 
    y dir=both, 
    y explicit
] table [x=x, y=y, y error plus=error1, y error minus=error2, col sep=comma] {
   x,            y,       error1,       error2
0.04, 0.0000000533, 0.0000000370, 0.0000000369
0.06, 0.0000007667, 0.0000001401, 0.0000001400
0.08, 0.0000043067, 0.0000003321, 0.0000003320
0.10, 0.0000188000, 0.0000021942, 0.0000021941
0.12, 0.0000511333, 0.0000036187, 0.0000036186
0.14, 0.0001306667, 0.0000182921, 0.0000182920
0.16, 0.0002966667, 0.0000275601, 0.0000275600
0.18, 0.0006113333, 0.0000395564, 0.0000395563
0.20, 0.0011193333, 0.0000535115, 0.0000535114
0.22, 0.0020113333, 0.0000716994, 0.0000716993
0.24, 0.0035293333, 0.0000949050, 0.0000949049
0.26, 0.0058606667, 0.0001221540, 0.0001221539
0.28, 0.0099533333, 0.0005023687, 0.0005023686
0.30, 0.0156866667, 0.0006288433, 0.0006288432
0.32, 0.0236733333, 0.0007693744, 0.0007693743
};
\addlegendentry{Cluster ($C=\infty$), ML}

\end{axis}

\end{tikzpicture}}
\caption{
Performance of the [[1600,64]] hypergraph product code over erasures with various decoders. 
The plot shows the failure rates of the decoders for recovering a Pauli-$X$ error supported on the erasure pattern, up to code degeneracy.
Error bars represent the 95\% confidence intervals around the simulated data points.
}
\label{fig:1600}
\end{figure}
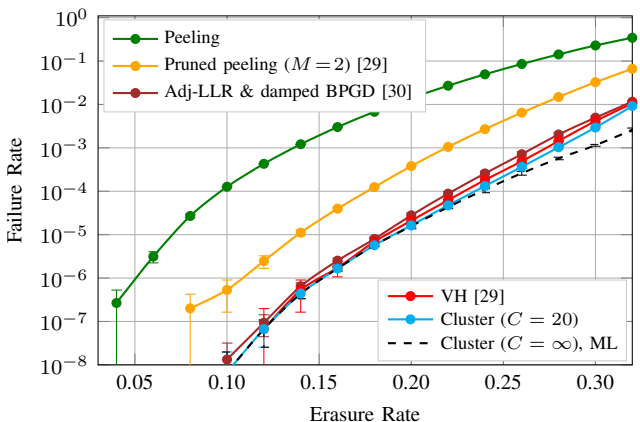
\begin{figure}[t]
\centering
\scalebox{1.0}{\begin{tikzpicture}

\definecolor{brown}{RGB}{165,42,42}
\definecolor{darkgray176}{RGB}{176,176,176}
\definecolor{green}{RGB}{0,128,0}
\definecolor{lightgray204}{RGB}{204,204,204}
\definecolor{lightgreen}{RGB}{144,238,144}
\definecolor{orange}{RGB}{255,165,0}
\definecolor{purple}{RGB}{128,0,128}
\definecolor{yellow}{RGB}{255,255,0}

\begin{axis}[
legend cell align={left},
legend style={
  fill opacity=0.7,
  draw opacity=1,
  text opacity=1,
  at={(0.01,0.99)},
  anchor=north west,
  draw=lightgray204,
  font=\scriptsize
},
width=3.8in,
height=2.7in,
xlabel={Erasure Rate},
ylabel={Failure Rate},
grid=both,
ticklabel style = {font=\footnotesize},
label style = {font=\footnotesize},
x grid style={darkgray176},
xmin=0.03, xmax=0.32,
xtick={0.00,0.05,0.10,0.15,0.20,0.25,0.30,0.35,0.40,0.45,0.50},
xticklabels={0.00,0.05,0.10,0.15,0.20,0.25,0.30,0.35,0.40,0.45,0.50},
ymode=log,
y grid style={darkgray176},
ymin=1e-8, ymax=1.10,
ytick={1,1e-1,1e-2,1e-3,1e-4,1e-05,1e-06,1e-07,1e-08},
scale=0.88
]

\addplot+[
  thick, green, mark=*, mark options={scale=0.75},
  smooth, 
  error bars/.cd, 
    y dir=both, 
    y explicit
] table [x=x, y=y, y error plus=error1, y error minus=error2, col sep=comma] {
   x,            y,       error1,       error2
0.04, 0.0000002667, 0.0000002613, 0.0000002612
0.06, 0.0000031333, 0.0000008958, 0.0000008957
0.08, 0.0000269333, 0.0000026263, 0.0000026262
0.10, 0.0001270667, 0.0000057043, 0.0000057042
0.12, 0.0004287333, 0.0000104764, 0.0000104763
0.14, 0.0012131333, 0.0000176157, 0.0000176156
0.16, 0.0030170667, 0.0000277554, 0.0000277553
0.18, 0.0067743333, 0.0000415114, 0.0000415113
0.20, 0.0139921333, 0.0000594419, 0.0000594418
0.22, 0.0270272667, 0.0000820657, 0.0000820656
0.24, 0.0494473333, 0.0003469525, 0.0003469524
0.26, 0.0855493333, 0.0004476091, 0.0004476090
0.28, 0.1422466667, 0.0005590011, 0.0005590010
0.30, 0.2289133333, 0.0021261680, 0.0021261679
0.32, 0.3438866667, 0.0024038510, 0.0024038509
};
\addlegendentry{Peeling}

\addplot+[
  thick, orange, mark=*, mark options={scale=0.75},
  smooth, 
  error bars/.cd, 
    y dir=both, 
    y explicit
] table [x=x, y=y, y error plus=error1, y error minus=error2, col sep=comma] {
   x,            y,       error1,       error2
0.08, 0.0000002000, 0.0000002263, 0.0000001999
0.10, 0.0000005333, 0.0000003696, 0.0000003695
0.12, 0.0000024667, 0.0000007948, 0.0000007947
0.14, 0.0000111333, 0.0000016886, 0.0000016885
0.16, 0.0000396667, 0.0000031872, 0.0000031871
0.18, 0.0001248000, 0.0000056532, 0.0000056531
0.20, 0.0003801333, 0.0000098650, 0.0000098649
0.22, 0.0010522000, 0.0000164071, 0.0000164070
0.24, 0.0027006667, 0.0000830536, 0.0000830535
0.26, 0.0064733333, 0.0001283406, 0.0001283405
0.28, 0.0148533333, 0.0001935855, 0.0001935854
0.30, 0.0326000000, 0.0008987156, 0.0008987155
0.32, 0.0663800000, 0.0012598364, 0.0012598363
};
\addlegendentry{Pruned peeling ($M\!=\!2$) \cite{connolly2024fast}}

\addplot+[
  thick, brown, mark=*, mark options={scale=0.75},
  smooth, 
  error bars/.cd, 
    y dir=both, 
    y explicit
] table [x=x, y=y, y error plus=error1, y error minus=error2, col sep=comma] {
   x,            y,       error1,       error2
0.10, 0.0000000133, 0.0000000185, 0.0000000132
0.12, 0.0000000933, 0.0000000489, 0.0000000488
0.14, 0.0000006467, 0.0000001287, 0.0000001286
0.16, 0.0000025200, 0.0000003593, 0.0000003592
0.18, 0.0000079667, 0.0000010100, 0.0000010099
0.20, 0.0000278667, 0.0000037780, 0.0000037779
0.22, 0.0000880000, 0.0000106149, 0.0000106148
0.24, 0.0002573333, 0.0000363009, 0.0000363008
0.26, 0.0007133333, 0.0000955403, 0.0000955402
0.28, 0.0020333333, 0.0002279674, 0.0002279673
0.30, 0.0049600000, 0.0003555261, 0.0003555260
0.32, 0.0115600000, 0.0005409592, 0.0005409591
};
\addlegendentry{Adj-LLR \& damped BPGD \cite{gokduman2024erasure}}

\addplot+[
  thick, red, mark=*, mark options={scale=0.75},
  smooth, 
  error bars/.cd, 
    y dir=both, 
    y explicit
] table [x=x, y=y, y error plus=error1, y error minus=error2, col sep=comma] {
   x,            y,       error1,       error2
0.12, 0.0000000667, 0.0000001307, 0.0000000666
0.14, 0.0000005333, 0.0000003696, 0.0000003695
0.16, 0.0000017333, 0.0000006663, 0.0000006662
0.18, 0.0000068667, 0.0000013261, 0.0000013260
0.20, 0.0000210667, 0.0000023228, 0.0000023227
0.22, 0.0000622000, 0.0000039911, 0.0000039910
0.24, 0.0001820000, 0.0000215877, 0.0000215876
0.26, 0.0004906667, 0.0000354403, 0.0000354402
0.28, 0.0014366667, 0.0000606145, 0.0000606144
0.30, 0.0041600000, 0.0003257256, 0.0003257255
0.32, 0.0105133333, 0.0005161615, 0.0005161614
};
\label{vh}

\addplot+[
  thick, cyan, mark=*, mark options={scale=0.75},
  smooth, 
  error bars/.cd, 
    y dir=both, 
    y explicit
] table [x=x, y=y, y error plus=error1, y error minus=error2, col sep=comma] {
   x,            y,       error1,       error2
0.10, 0.0000000067, 0.0000000131, 0.0000000066
0.12, 0.0000000667, 0.0000000413, 0.0000000412
0.14, 0.0000004400, 0.0000001062, 0.0000001061
0.16, 0.0000016600, 0.0000002062, 0.0000002061
0.18, 0.0000057000, 0.0000003821, 0.0000003820
0.20, 0.0000163333, 0.0000020452, 0.0000020451
0.22, 0.0000473333, 0.0000034816, 0.0000034815
0.24, 0.0001320000, 0.0000183852, 0.0000183851
0.26, 0.0003660000, 0.0000306106, 0.0000306105
0.28, 0.0010313333, 0.0000513672, 0.0000513671
0.30, 0.0029433333, 0.0000866942, 0.0000866941
0.32, 0.0092133333, 0.0004835138, 0.0004835137
};
\label{cluster_20}

\addplot+[
  thick, black, dashed, mark=., mark options={scale=0.75},
  smooth, 
  error bars/.cd, 
    y dir=both, 
    y explicit
] table [x=x, y=y, y error plus=error1, y error minus=error2, col sep=comma] {
   x,            y,       error1,       error2
0.10, 0.0000000067, 0.0000000131, 0.0000000066
0.12, 0.0000000667, 0.0000000413, 0.0000000412
0.14, 0.0000004400, 0.0000001062, 0.0000001061
0.16, 0.0000016600, 0.0000002062, 0.0000002061
0.18, 0.0000057000, 0.0000003821, 0.0000003820
0.20, 0.0000157333, 0.0000020073, 0.0000020072
0.22, 0.0000431333, 0.0000033236, 0.0000033235
0.24, 0.0001093333, 0.0000167326, 0.0000167325
0.26, 0.0002606667, 0.0000258343, 0.0000258342
0.28, 0.0005706667, 0.0000382189, 0.0000382188
0.30, 0.0011453333, 0.0000541287, 0.0000541286
0.32, 0.0026066667, 0.0002580396, 0.0002580395
};
\label{ml}

\node [
    fill=white,
    fill opacity=0.7,
    draw opacity=1,
    text opacity=1,
    anchor=south east,
    draw=lightgray204,
    font=\scriptsize,
] at (rel axis cs: 0.99,0.01) {
\shortstack[l]{
\ref{vh} VH \cite{connolly2024fast}\\
\ref{cluster_20} Cluster ($C=20$)\\
\ref{ml} Cluster ($C=\infty$), ML
}
};

\end{axis}

\end{tikzpicture}}
\caption{
Performance of the [[2025,81]] hypergraph product code over erasures with various decoders. 
The plot shows the failure rates of the decoders for recovering a Pauli-$X$ error supported on the erasure pattern, up to code degeneracy.
Error bars represent the 95\% confidence intervals around the simulated data points.
}
\label{fig:2025}
\end{figure}

In Figures \ref{fig:1600} and \ref{fig:2025}, we present simulation results for the cluster decoder for the [[1600, 64]] and [[2025, 81]] hypergraph product codes over erasures, respectively. 
These codes are two of the four hypergraph product codes simulated in \cite[Figure 4]{connolly2024fast}. 
In both figures, the blue curves represent the performance of the cluster decoder with a cluster size constraint of $C=20$, 
while the black curves show the cluster decoder performance without a cluster size constraint, equivalent to ML decoding. 
For comparison, we also include the performance of the peeling decoder, the pruned peeling decoder from \cite{connolly2024fast}, the VH decoder from \cite{connolly2024fast}, and the BPGD decoder from \cite{gokduman2024erasure}. 
The results show that, for both hypergraph product codes, 
the cluster decoder with $C=20$ closely approaches ML performance in the low-erasure-rate regime, similar to both the VH decoder and the BPGD decoder.

\begin{figure}[t]
\centering
\scalebox{1.0}{\begin{tikzpicture}

\definecolor{brown}{RGB}{165,42,42}
\definecolor{darkgray176}{RGB}{176,176,176}
\definecolor{green}{RGB}{0,128,0}
\definecolor{lightgray204}{RGB}{204,204,204}
\definecolor{lightgreen}{RGB}{144,238,144}
\definecolor{orange}{RGB}{255,165,0}
\definecolor{purple}{RGB}{128,0,128}
\definecolor{yellow}{RGB}{255,255,0}

\definecolor{t_blue}{HTML}{1f77b4}
\definecolor{t_orange}{HTML}{ff7f0e}
\definecolor{t_green}{HTML}{2ca02c}
\definecolor{t_red}{HTML}{d62728}
\definecolor{t_purple}{HTML}{9467bd}
\definecolor{t_brown}{HTML}{8c564b}
\definecolor{t_pink}{HTML}{e377c2}
\definecolor{t_gray}{HTML}{7f7f7f}
\definecolor{t_olive}{HTML}{bcbd22}
\definecolor{t_cyan}{HTML}{17becf}

\begin{axis}[
legend cell align={left},
legend style={
  fill opacity=0.7,
  draw opacity=1,
  text opacity=1,
  at={(0.01,0.99)},
  anchor=north west,
  draw=lightgray204,
  font=\scriptsize
},
width=3.8in,
height=2.7in,
xlabel={Erasure Rate},
ylabel={Failure Rate},
grid=both,
ticklabel style = {font=\footnotesize},
label style = {font=\footnotesize},
x grid style={darkgray176},
xmin=0.01, xmax=0.52,
xtick={0.00,0.05,0.10,0.15,0.20,0.25,0.30,0.35,0.40,0.45,0.50},
xticklabels={0.00,0.05,0.10,0.15,0.20,0.25,0.30,0.35,0.40,0.45,0.50},
ymode=log,
y grid style={darkgray176},
ymin=2e-9, ymax=2.0,
ytick={1,1e-1,1e-2,1e-3,1e-4,1e-05,1e-06,1e-07,1e-08},
scale=0.88
]

\addplot+[
  thick, t_green, mark=o, mark options={scale=0.75},
  smooth, 
  error bars/.cd, 
    y dir=both, 
    y explicit
] table [x=x, y=y, y error plus=error1, y error minus=error2, col sep=comma] {
   x,            y,       error1,       error2
0.28, 0.0103800000, 0.0005129126, 0.0005129125
0.30, 0.0175266667, 0.0006640801, 0.0006640800
0.32, 0.0302266667, 0.0008664444, 0.0008664443
0.34, 0.0546666667, 0.0036380128, 0.0036380127
0.36, 0.1141333333, 0.0050886300, 0.0050886299
0.38, 0.2643333333, 0.0070571110, 0.0070571109
0.40, 0.6251333333, 0.0077470290, 0.0077470289
0.42, 0.9426666667, 0.0037204298, 0.0037204297
0.44, 0.9984666667, 0.0006261747, 0.0006261746
0.46, 1.0000000000, 0.0000000000, 0.0000000000
0.48, 1.0000000000, 0.0000000000, 0.0000000000
0.50, 1.0000000000, 0.0000000000, 0.0000000000
};
\addlegendentry{[[1600,64]], $C = 20$}

\addplot+[
  thick, t_blue, solid, mark=o, mark options={scale=0.75},
  smooth, 
  error bars/.cd, 
    y dir=both, 
    y explicit
] table [x=x, y=y, y error plus=error1, y error minus=error2, col sep=comma] {
   x,            y,       error1,       error2
0.04, 0.0000000533, 0.0000000370, 0.0000000369
0.06, 0.0000007667, 0.0000001401, 0.0000001400
0.08, 0.0000043067, 0.0000003321, 0.0000003320
0.10, 0.0000188000, 0.0000021942, 0.0000021941
0.12, 0.0000511333, 0.0000036187, 0.0000036186
0.14, 0.0001306667, 0.0000182921, 0.0000182920
0.16, 0.0002966667, 0.0000275601, 0.0000275600
0.18, 0.0006113333, 0.0000395564, 0.0000395563
0.20, 0.0011193333, 0.0000535115, 0.0000535114
0.22, 0.0020113333, 0.0000716994, 0.0000716993
0.24, 0.0035293333, 0.0000949050, 0.0000949049
0.26, 0.0058606667, 0.0001221540, 0.0001221539
0.28, 0.0099533333, 0.0005023687, 0.0005023686
0.30, 0.0156866667, 0.0006288433, 0.0006288432
0.32, 0.0236733333, 0.0007693744, 0.0007693743
0.34, 0.0346000000, 0.0029248402, 0.0029248401
0.36, 0.0508000000, 0.0035141570, 0.0035141569
0.38, 0.0799333333, 0.0043399433, 0.0043399432
0.40, 0.1121333333, 0.0050495385, 0.0050495384
0.42, 0.1708000000, 0.0060226000, 0.0060225999
0.44, 0.2442000000, 0.0068752218, 0.0068752217
0.46, 0.3956000000, 0.0078252967, 0.0078252966
0.48, 0.7446000000, 0.0069788265, 0.0069788264
0.50, 0.9762666667, 0.0024359818, 0.0024359817
};
\addlegendentry{[[1600,64]], ML}

\addplot+[
  thick, t_pink, mark=o, mark options={scale=0.75},
  smooth, 
  error bars/.cd, 
    y dir=both, 
    y explicit
] table [x=x, y=y, y error plus=error1, y error minus=error2, col sep=comma] {
   x,            y,       error1,       error2
0.20, 0.0000163333, 0.0000020452, 0.0000020451
0.22, 0.0000473333, 0.0000034816, 0.0000034815
0.24, 0.0001320000, 0.0000183852, 0.0000183851
0.26, 0.0003660000, 0.0000306106, 0.0000306105
0.28, 0.0010313333, 0.0000513672, 0.0000513671
0.30, 0.0029433333, 0.0000866942, 0.0000866941
0.32, 0.0092133333, 0.0004835138, 0.0004835137
0.34, 0.0236000000, 0.0024292954, 0.0024292953
0.36, 0.0656000000, 0.0039621305, 0.0039621304
0.38, 0.1982000000, 0.0063796271, 0.0063796270
0.40, 0.5893333333, 0.0078729170, 0.0078729169
0.42, 0.9475333333, 0.0035682020, 0.0035682019
0.44, 0.9992000000, 0.0004524615, 0.0004524614
0.46, 1.0000000000, 0.0000000000, 0.0000000000
0.48, 1.0000000000, 0.0000000000, 0.0000000000
0.50, 1.0000000000, 0.0000000000, 0.0000000000
};
\label{2025_c_20}

\addplot+[
  thick, t_purple, solid, mark=o, mark options={scale=0.75},
  smooth, 
  error bars/.cd, 
    y dir=both, 
    y explicit
] table [x=x, y=y, y error plus=error1, y error minus=error2, col sep=comma] {
   x,            y,       error1,       error2
0.10, 0.0000000067, 0.0000000131, 0.0000000066
0.12, 0.0000000667, 0.0000000413, 0.0000000412
0.14, 0.0000004400, 0.0000001062, 0.0000001061
0.16, 0.0000016600, 0.0000002062, 0.0000002061
0.18, 0.0000057000, 0.0000003821, 0.0000003820
0.20, 0.0000157333, 0.0000020073, 0.0000020072
0.22, 0.0000431333, 0.0000033236, 0.0000033235
0.24, 0.0001093333, 0.0000167326, 0.0000167325
0.26, 0.0002606667, 0.0000258343, 0.0000258342
0.28, 0.0005706667, 0.0000382189, 0.0000382188
0.30, 0.0011453333, 0.0000541287, 0.0000541286
0.32, 0.0026066667, 0.0002580396, 0.0002580395
0.34, 0.0036000000, 0.0009584701, 0.0009584700
0.36, 0.0082000000, 0.0014432096, 0.0014432095
0.38, 0.0172666667, 0.0020846471, 0.0020846470
0.40, 0.0328000000, 0.0028503980, 0.0028503979
0.42, 0.0598000000, 0.0037946450, 0.0037946449
0.44, 0.1086666667, 0.0049805659, 0.0049805658
0.46, 0.2360666667, 0.0067960334, 0.0067960333
0.48, 0.6914000000, 0.0073921895, 0.0073921894
0.50, 0.9828000000, 0.0020806893, 0.0020806892
};
\label{2025_ml}

\addplot+[
  thick, t_cyan, mark=o, mark options={scale=0.75},
  smooth, 
  error bars/.cd, 
    y dir=both, 
    y explicit
] table [x=x, y=y, y error plus=error1, y error minus=error2, col sep=comma] {
   x,            y,       error1,       error2
0.20, 0.0000000347, 0.0000000136, 0.0000000135
0.22, 0.0000001533, 0.0000000362, 0.0000000361
0.24, 0.0000008444, 0.0000001342, 0.0000001341
0.26, 0.0000045810, 0.0000003136, 0.0000003135
0.28, 0.0000256136, 0.0000014954, 0.0000014953
0.30, 0.0001230000, 0.0000153698, 0.0000153697
0.32, 0.0005580000, 0.0000462862, 0.0000462861
0.34, 0.0022920000, 0.0000937271, 0.0000937270
0.36, 0.0088230000, 0.0001832904, 0.0001832903
0.38, 0.0327766667, 0.0006371493, 0.0006371492
0.40, 0.2841333333, 0.0016138851, 0.0016138850
0.42, 0.9955906040, 0.0002378909, 0.0002378908
0.44, 1.0000000000, 0.0000000000, 0.0000000000
0.46, 1.0000000000, 0.0000000000, 0.0000000000
0.48, 1.0000000000, 0.0000000000, 0.0000000000
0.50, 1.0000000000, 0.0000000000, 0.0000000000
};
\label{10000_c_20}

\addplot+[
  thick, t_olive, solid, mark=o, mark options={scale=0.75},
  smooth, 
  error bars/.cd, 
    y dir=both, 
    y explicit
] table [x=x, y=y, y error plus=error1, y error minus=error2, col sep=comma] {
   x,            y,       error1,       error2
0.20, 0.0000000153, 0.0000000090, 0.0000000089
0.22, 0.0000000222, 0.0000000138, 0.0000000137
0.24, 0.0000001556, 0.0000000576, 0.0000000575
0.26, 0.0000004134, 0.0000000942, 0.0000000941
0.28, 0.0000011830, 0.0000001424, 0.0000001423
0.30, 0.0000030667, 0.0000006267, 0.0000006266
0.32, 0.0000093333, 0.0000015461, 0.0000015460
0.34, 0.0000236000, 0.0000024584, 0.0000024583
0.36, 0.0000613103, 0.0000040302, 0.0000040301
0.38, 0.0001133333, 0.0000380934, 0.0000380933
0.40, 0.0004000000, 0.0000715548, 0.0000715547
0.42, 0.0009597315, 0.0001111768, 0.0001111767
0.44, 0.0026666667, 0.0007533978, 0.0007533977
0.46, 0.0073888889, 0.0012511200, 0.0012511199
0.48, 0.7464444444, 0.0063555791, 0.0063555790
0.50, 1.0000000000, 0.0000000000, 0.0000000000
};
\label{10000_ml}

\node [
    fill=white,
    fill opacity=0.7,
    draw opacity=1,
    text opacity=1,
    anchor=south east,
    draw=lightgray204,
    font=\scriptsize,
] at (rel axis cs: 0.99,0.01) {
\shortstack[l]{
\ref{2025_c_20} [[2025,81]], $C = 20$\\
\ref{2025_ml} [[2025,81]], ML\\
\ref{10000_c_20} [[10000,400]], $C = 20$\\
\ref{10000_ml} [[10000,400]], ML
}
};

\end{axis}
\end{tikzpicture}}
\caption{Performance of the cluster decoder on the [[1600,64]], [[2025,81]], and [[10000,400]] hypergraph product codes over erasures. 
The plot shows the failure rates of the decoders for recovering a Pauli-$X$ error supported on the erasure pattern, up to code degeneracy.
Error bars represent the 95\% confidence intervals around the simulated data points.}
\label{fig:hgp}
\end{figure}
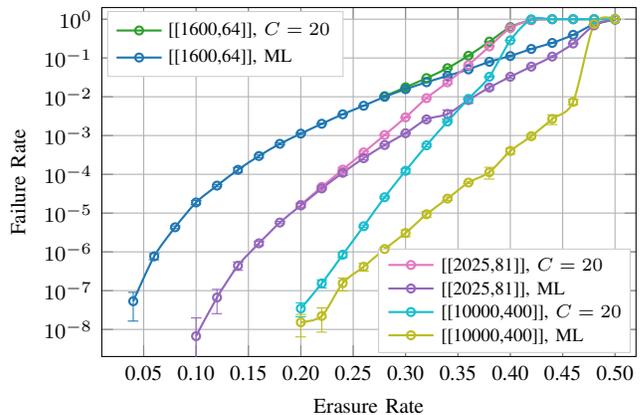

In Figure~\ref{fig:hgp}, we present the performance of the cluster decoder on three hypergraph product codes with parameters [[1600, 64]], [[2025, 81]], and [[10000, 400]] over erasures, both with a cluster size constraint of $C=20$, and without a cluster size constraint, which is equivalent to ML decoding.

For all three codes, the cluster decoder with $C=20$ approaches ML performance in the low-erasure regime, although the specific erasure rates vary. 
For the shortest [[1600, 64]] code, the cluster decoder with $C=20$ achieves close-to-ML performance when the erasure rate drops below 0.3. 
For the medium-length [[2025, 81]] code, the cluster decoder with $C=20$ closely approximates ML performance when the erasure rate drops below 0.25. 
For the longest [[10000, 400]] code, the cluster decoder with $C=20$ catches up to ML performance at an erasure rate below 0.20.

We note that the erasure rate at which the cluster decoder with a cluster size constraint $C$ approaches ML performance also depends on the value of $C$. It is expected that a larger $C$ increases the complexity of the cluster decoder but allows it to achieve close-to-ML performance at higher erasure rates.

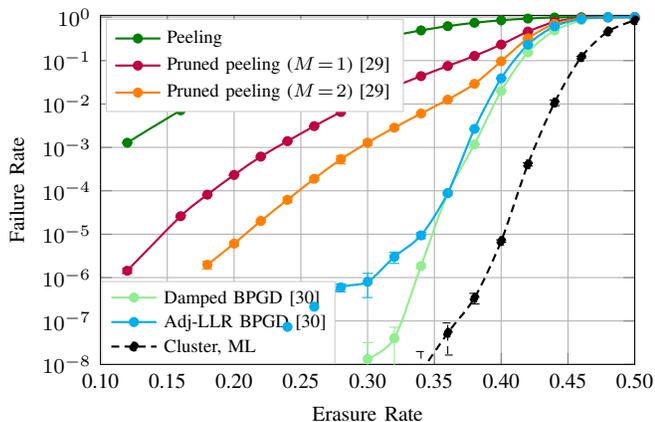
\begin{figure}[t]
\centering
\scalebox{1.0}{
\begin{tikzpicture}

\definecolor{darkgray176}{RGB}{176,176,176}
\definecolor{lightgray204}{RGB}{204,204,204}
\definecolor{lightgreen}{RGB}{144,238,144}
\definecolor{yellow}{RGB}{255,255,0}
\definecolor{green}{RGB}{0,128,0}

\begin{axis}[
legend cell align={left},
legend style={
  fill opacity=0.7,
  draw opacity=1,
  text opacity=1,
  at={(0.01,0.99)},
  anchor=north west,
  draw=lightgray204,
  font=\scriptsize
},
width=3.8in,
height=2.7in,
xlabel={Erasure Rate},
ylabel={Failure Rate},
grid=both,
ticklabel style = {font=\footnotesize},
label style = {font=\footnotesize},
x grid style={darkgray176},
xmin=0.1, xmax=0.5,
xtick={0.00,0.05,0.10,0.15,0.20,0.25,0.30,0.35,0.40,0.45,0.50},
xticklabels={0.00,0.05,0.10,0.15,0.20,0.25,0.30,0.35,0.40,0.45,0.50},
ymode=log,
y grid style={darkgray176},
ymin=1e-8, ymax=1.10,
ytick={1,1e-1,1e-2,1e-3,1e-4,1e-05,1e-06,1e-07,1e-08},
scale=0.88
]

\addplot+[
  thick, green, mark=*, mark options={scale=0.75},
  smooth, 
  error bars/.cd, 
    y dir=both, 
    y explicit
] table [x=x, y=y, y error plus=error1, y error minus=error2, col sep=comma] {
   x,            y,       error1,       error2
0.12, 0.0012803133, 0.0000057226, 0.0000057226
0.16, 0.0071674021, 0.0000237413, 0.0000237413
0.18, 0.0145025155, 0.0000336461, 0.0000336461
0.2,  0.0271782474, 0.0000457628, 0.0000457628
0.22, 0.0477916667, 0.0002414001, 0.0002414000
0.24, 0.0792880000, 0.0003057462, 0.0003057461
0.26, 0.1257083333, 0.0003751506, 0.0003751505
0.28, 0.1894600000, 0.0019831540, 0.0019831539
0.30, 0.2727400000, 0.0022538744, 0.0022538743
0.32, 0.3776133333, 0.0024533767, 0.0024533766
0.34, 0.4964466667, 0.0025302852, 0.0025302851
0.36, 0.6218933333, 0.0024540057, 0.0024540056
0.38, 0.7431066667, 0.0022111206, 0.0022111205
0.40, 0.8492066667, 0.0018109573, 0.0018109572
0.42, 0.9298000000, 0.0040885919, 0.0040885918
0.44, 0.9826000000, 0.0020925384, 0.0020925383
0.46, 0.9979333333, 0.0007267690, 0.0007267689
0.48, 1.0000000000, 0.0000000000, 0.0000000000
0.50, 1.0000000000, 0.0000000000, 0.0000000000
};
\addlegendentry{Peeling}

\addplot+[
  thick, purple, mark=*, mark options={scale=0.75},
  smooth, 
  error bars/.cd, 
    y dir=both, 
    y explicit
] table [x=x, y=y, y error plus=error1, y error minus=error2, col sep=comma] {
   x,            y,       error1,       error2
0.12, 0.0000014400, 0.0000001920, 0.0000001920 
0.16, 0.0000259793, 0.0000014345, 0.0000014345
0.18, 0.0000815463, 0.0000025414, 0.0000025414
0.2,  0.0002317113, 0.0000042836, 0.0000042836  
0.22, 0.0006136667, 0.0000280239, 0.0000280238
0.24, 0.0013926667, 0.0000422004, 0.0000422003
0.26, 0.0030890000, 0.0000627961, 0.0000627960
0.28, 0.0065466667, 0.0004081260, 0.0004081259
0.30, 0.0132266667, 0.0005781551, 0.0005781550
0.32, 0.0250733333, 0.0007912295, 0.0007912294
0.34, 0.0438866667, 0.0010366483, 0.0010366482
0.36, 0.0752066667, 0.0013346302, 0.0013346301
0.38, 0.1281133333, 0.0016913661, 0.0016913660
0.40, 0.2343866667, 0.0021437870, 0.0021437869
0.42, 0.4660666667, 0.0079832178, 0.0079832177
0.44, 0.7801333333, 0.0066278788, 0.0066278787
0.46, 0.9572000000, 0.0032391695, 0.0032391694
0.48, 0.9972666667, 0.0008355307, 0.0008355306
0.50, 1.0000000000, 0.0000000000, 0.0000000000
};
\addlegendentry{Pruned peeling ($M\!=\!1$) \cite{connolly2024fast}}

\addplot+[
  thick, orange, mark=*, mark options={scale=0.75},
  smooth, 
  error bars/.cd, 
    y dir=both, 
    y explicit
] table [x=x, y=y, y error plus=error1, y error minus=error2, col sep=comma] {
   x,            y,       error1,       error2
0.18,  0.0000019794, 0.0000003959, 0.0000003959
0.2,   0.0000060618, 0.0000006929, 0.0000006929
0.22, 0.0000201649, 0.0000012638, 0.0000012638
0.24, 0.0000620000, 0.0000089100, 0.0000089099
0.26, 0.0001890000, 0.0000155555, 0.0000155554
0.28, 0.0005333333, 0.0001168406, 0.0001168405
0.30, 0.0012866667, 0.0001814111, 0.0001814110
0.32, 0.0028533333, 0.0002699394, 0.0002699393
0.34, 0.0060200000, 0.0003914691, 0.0003914690
0.36, 0.0126133333, 0.0005647666, 0.0005647665
0.38, 0.0291400000, 0.0008512037, 0.0008512036
0.40, 0.0964000000, 0.0014936095, 0.0014936094
0.42, 0.3267333333, 0.0075058645, 0.0075058644
0.44, 0.7005333333, 0.0073299175, 0.0073299174
0.46, 0.9383333333, 0.0038495880, 0.0038495879
0.48, 0.9959333333, 0.0010184621, 0.0010184620
0.50, 1.0000000000, 0.0000000000, 0.0000000000
};
\addlegendentry{Pruned peeling ($M\!=\!2$) \cite{connolly2024fast}}


\addplot+[
  thick, lightgreen, mark=*, mark options={scale=0.75},
  smooth, 
  error bars/.cd, 
    y dir=both, 
    y explicit
] table [x=x, y=y, y error plus=error1, y error minus=error2, col sep=comma] {
   x,            y,       error1,       error2
0.30, 0.0000000133, 0.0000000185, 0.0000000132
0.32, 0.0000000400, 0.0000000320, 0.0000000319
0.34, 0.0000018533, 0.0000002179, 0.0000002178
0.36, 0.0000873333, 0.0000105747, 0.0000105746
0.38, 0.0011733333, 0.0000387393, 0.0000387392
0.40, 0.0199266667, 0.0007072241, 0.0007072240
0.42, 0.1544000000, 0.0057825125, 0.0057825124
0.44, 0.4942666667, 0.0113153213, 0.0113153212
0.46, 0.8537333333, 0.0079975919, 0.0079975918
};
\label{bpgd_d}

\addplot+[
  thick, cyan, mark=*, mark options={scale=0.75},
  smooth, 
  error bars/.cd, 
    y dir=both, 
    y explicit
] table [x=x, y=y, y error plus=error1, y error minus=error2, col sep=comma] {
   x,            y,       error1,       error2
0.24, 0.0000000733, 0.0000000433, 0.0000000432
0.26, 0.0000002133, 0.0000000739, 0.0000000738
0.28, 0.0000005933, 0.0000001233, 0.0000001232
0.30, 0.0000008000, 0.0000004526, 0.0000004525
0.32, 0.0000030000, 0.0000008765, 0.0000008764
0.34, 0.0000094000, 0.0000015516, 0.0000015515
0.36, 0.0000887333, 0.0000047669, 0.0000047668
0.38, 0.0026666667, 0.0000825306, 0.0000825305
0.40, 0.0389586667, 0.0003096589, 0.0003096588
0.42, 0.2328882985, 0.0010320290, 0.0010320289
0.44, 0.6162354197, 0.0019316578, 0.0019316577
0.46, 0.9056625415, 0.0014076207, 0.0014076206
0.48, 0.9897674419, 0.0005066418, 0.0005066417
0.50, 0.9995403279, 0.0001084376, 0.0001084375
};
\label{bpgd_a}

\addplot+[
  thick, black, mark=*, mark options={scale=0.75},
  smooth, 
  error bars/.cd, 
    y dir=both, 
    y explicit
] table [x=x, y=y, y error plus=error1, y error minus=error2, col sep=comma] {
   x,            y,       error1,       error2
0.34, 0.0000000067, 0.0000000131, 0.0000000066
0.36, 0.0000000533, 0.0000000370, 0.0000000369
0.38, 0.0000003400, 0.0000000933, 0.0000000932
0.40, 0.0000069600, 0.0000004222, 0.0000004221
0.42, 0.0004080000, 0.0000323186, 0.0000323185
0.44, 0.0108000000, 0.0016541099, 0.0016541098
0.46, 0.1204666667, 0.0052091881, 0.0052091880
0.48, 0.4699333333, 0.0079871863, 0.0079871862
0.50, 0.8470000000, 0.0057609998, 0.0057609997
};
\label{yml}

\node [
    fill=white,
    fill opacity=0.7,
    draw opacity=1,
    text opacity=1,
    anchor=south west,
    draw=lightgray204,
    font=\scriptsize,
] at (rel axis cs: 0.001,0.001) {
\shortstack[l]{
\ref{bpgd_d} Damped BPGD \cite{gokduman2024erasure}\\
\ref{bpgd_a} Adj-LLR BPGD \cite{gokduman2024erasure}\\
\ref{yml} Cluster, ML
}
};

\end{axis}

\end{tikzpicture}}
\caption{Performance of the [[882, 24]] B1 lifted-product QLDPC code from~\cite{panteleev2021degenerate} over erasures with various decoders. 
The plot shows the failure rates of the decoders for recovering a Pauli-$X$ error supported on the erasure pattern, up to code degeneracy.
Error bars represent the 95\% confidence intervals around the simulated data points.}
\label{fig:b1}
\end{figure}

In Figure~\ref{fig:b1}, we present the ML performance curve for the [[882,24]] B1 lifted-product code\footnote{This was referred to as a generalized hypergraph product code in the original paper, but the term lifted-product code is now commonly used in the literature.} from~\cite{panteleev2021degenerate} over erasures using the cluster decoder without a cluster size constraint.
Figure~\ref{fig:b1} also includes the performance of the peeling decoder, the pruned peeling decoder from \cite{connolly2024fast}, and the BPGD decoder from \cite{gokduman2024erasure}. 
Both Figure~\ref{fig:hgp} and Figure~\ref{fig:b1} demonstrate that the cluster decoder can reproduce the ML performance curve with reduced complexity for general quantum LDPC codes at high erasure rates.

\subsection{Cluster Size Analysis}

According to Theorem~\ref{thm2}, if the maximum cluster size after cluster decomposition, denoted by $s_{\max}$, is less than or equal to a constraint $C$, 
the complexity of the cluster decoder can be bounded by $O(nC^2)$.
To better understand the complexity of the cluster decoder, we provide statistics on the percentage of erasure patterns with $s_{\max}$ exceeding $C$ for different values of $C$ for the [[10000,400]] hypergraph product code. 
The performance of this code has been shown in Figure~\ref{fig:hgp}.

In Figure~\ref{fig:cluster_size}, the blue curve shows the percentage of error patterns in our simulation that are not peelable, meaning the peeling decoder alone cannot fully recover the erasures, requiring cluster decomposition post-processing. 
Among these cases, the other curves show the percentage of erasure patterns where the maximum cluster size after cluster decomposition exceeds a constraint $C$, 
with $C=10,20,50,100$ and $200$.
A sharp transition in the maximum cluster sizes is observed as the erasure rate increases from 0.38 to 0.42.
When the erasure rate is below 0.38, most stopping sets after cluster decomposition have cluster sizes less than or equal to 20. 
In contrast, when the erasure rate exceeds 0.42, nearly all erasure patterns result in maximum cluster sizes exceeding 200.

This indicates that, in most cases, the complexity of cluster decoding for the [[10000, 400]] hypergraph product code remains close to that of a linear-complexity decoder when the erasure rate is below 0.38. In this regime, nearly all erasure patterns can be peeled and decomposed into low-weight clusters, aligning with the observations in \cite{kovalev2013fault}.
However, at higher erasure rates above 0.42, the stopping set often contains a large, indecomposable cluster that requires Gaussian Elimination on the entire cluster.
For a sequence of codes with increasing length, this phenomenon is probably associated with a phase transition in the behvaior of peeling followed cluster decomposition.

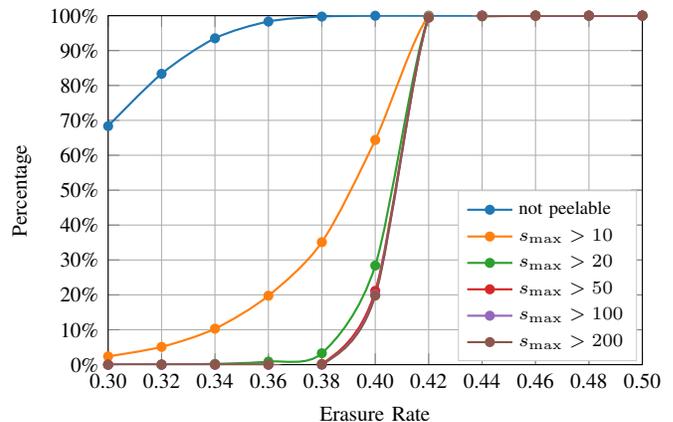
\begin{figure}[t]
\centering
\scalebox{1.0}{\begin{tikzpicture}

\definecolor{brown}{RGB}{165,42,42}
\definecolor{darkgray176}{RGB}{176,176,176}
\definecolor{green}{RGB}{0,128,0}
\definecolor{lightgray204}{RGB}{204,204,204}
\definecolor{lightgreen}{RGB}{144,238,144}
\definecolor{orange}{RGB}{255,165,0}
\definecolor{purple}{RGB}{128,0,128}
\definecolor{yellow}{RGB}{255,255,0}

\definecolor{t_blue}{HTML}{1f77b4}
\definecolor{t_orange}{HTML}{ff7f0e}
\definecolor{t_green}{HTML}{2ca02c}
\definecolor{t_red}{HTML}{d62728}
\definecolor{t_purple}{HTML}{9467bd}
\definecolor{t_brown}{HTML}{8c564b}
\definecolor{t_pink}{HTML}{e377c2}
\definecolor{t_gray}{HTML}{7f7f7f}
\definecolor{t_olive}{HTML}{bcbd22}
\definecolor{t_cyan}{HTML}{17becf}

\begin{axis}[
legend cell align={left},
legend style={
  fill opacity=0.7,
  draw opacity=1,
  text opacity=1,
  at={(0.99,0.01)},
  anchor=south east,
  draw=lightgray204,
  font=\scriptsize
},
width=3.8in,
height=2.7in,
xlabel={Erasure Rate},
ylabel={Percentage},
grid=both,
ticklabel style = {font=\footnotesize},
label style = {font=\footnotesize},
x grid style={darkgray176},
xmin=0.30, xmax=0.50,
xtick={0.30,0.32,0.34,0.36,0.38,0.40,0.42,0.44,0.46,0.48,0.50},
xticklabels={0.30,0.32,0.34,0.36,0.38,0.40,0.42,0.44,0.46,0.48,0.50},
y grid style={darkgray176},
ymin=0.00, ymax=1.00,
ytick={0.0,0.1,0.2,0.3,0.4,0.5,0.6,0.7,0.8,0.9,1.0},
yticklabels={0\%,10\%,20\%,30\%,40\%,50\%,60\%,70\%,80\%,90\%,100\%},
scale=0.88
]

\addplot+[
  thick, t_blue, mark=*, mark options={scale=0.75},
  smooth, 
] table [x=x, y=y, col sep=comma] {
   x,            y
0.30, 0.6835130000
0.32, 0.8333730000
0.34, 0.9351410000
0.36, 0.9831450000
0.38, 0.9974033333
0.40, 0.9997966667
0.42, 1.0000000000
0.44, 1.0000000000
0.46, 1.0000000000
0.48, 1.0000000000
0.50, 1.0000000000
};
\addlegendentry{not peelable}

\addplot+[
  thick, t_orange, mark=*, mark options={scale=0.75},
  smooth, 
] table [x=x, y=y, col sep=comma] {
   x,            y
0.30, 0.0237215000
0.32, 0.0509770000
0.34, 0.1030780000
0.36, 0.1975380000
0.38, 0.3509633333
0.40, 0.6437866667
0.42, 0.9982651007
0.44, 1.0000000000
0.46, 1.0000000000
0.48, 1.0000000000
0.50, 1.0000000000
};
\addlegendentry{$s_{\max}>10$}

\addplot+[
  thick, t_green, mark=*, mark options={scale=0.75},
  smooth, 
] table [x=x, y=y, col sep=comma] {
   x,            y
0.30, 0.0001190000
0.32, 0.0005490000
0.34, 0.0022730000
0.36, 0.0087800000
0.38, 0.0327100000
0.40, 0.2839866667
0.42, 0.9955906040
0.44, 1.0000000000
0.46, 1.0000000000
0.48, 1.0000000000
0.50, 1.0000000000
};
\addlegendentry{$s_{\max}>20$}

\addplot+[
  thick, t_red, mark=*, mark options={scale=0.75},
  smooth, 
] table [x=x, y=y, col sep=comma] {
   x,            y
0.30, 0.0000000000
0.32, 0.0000020000
0.34, 0.0000150000
0.36, 0.0002170000
0.38, 0.0021833333
0.40, 0.2114333333
0.42, 0.9947348993
0.44, 1.0000000000
0.46, 1.0000000000
0.48, 1.0000000000
0.50, 1.0000000000
};
\addlegendentry{$s_{\max}>50$}

\addplot+[
  thick, solid, t_purple, mark=*, mark options={scale=0.75},
  smooth, 
] table [x=x, y=y, col sep=comma] {
   x,            y
0.30, 0.0000000000
0.32, 0.0000000000
0.34, 0.0000000000
0.36, 0.0000060000
0.38, 0.0002133333
0.40, 0.2006866667
0.42, 0.9945268456
0.44, 1.0000000000
0.46, 1.0000000000
0.48, 1.0000000000
0.50, 1.0000000000
};
\addlegendentry{$s_{\max}>100$}

\addplot+[
  thick, solid, t_brown, mark=*, mark options={scale=0.75},
  smooth, 
] table [x=x, y=y, col sep=comma] {
   x,            y
0.30, 0.0000000000
0.32, 0.0000000000
0.34, 0.0000000000
0.36, 0.0000000000
0.38, 0.0000133333
0.40, 0.1978566667
0.42, 0.9944127517
0.44, 0.9977777778
0.46, 1.0000000000
0.48, 1.0000000000
0.50, 1.0000000000
};
\addlegendentry{$s_{\max}>200$}

\end{axis}

\end{tikzpicture}}
\caption{
Cluster size statistics for the [[10000, 400]] hypergraph product code.
The blue curve shows the percentage of erasure patterns that are not peelable. 
The other curves show the percentage of erasure patterns whose maximum cluster size $s_{\max}$ exceeds a constraint $C$, with $C = 10,20,50,100$ and $200$.} 
\label{fig:cluster_size}
\end{figure}

\section{Conclusion}
\label{sec:conclusion}
In this work, we introduce a new erasure decoding algorithm for quantum LDPC codes that combines classical peeling with a cluster decomposition post-processing step. 
Our approach can be viewed as a generalization of the VH decomposition method used in the previously proposed VH decoder that applies to arbitrary quantum LDPC codes. 
By either allowing unconstrained cluster sizes to achieve ML performance or constraining cluster sizes by a constant to achieve linear decoding complexity, the cluster decoder offers a versatile solution for the erasure decoding of quantum LDPC codes. 
Simulation results show that the cluster decoder with a constant cluster size can offer close-to-ML performance for hypergraph product codes in the low-erasure-rate regime. 
For the general quantum LDPC code we simulated, without the cluster size constraint, it can also effectively produce the ML performance curve at high erasure rates.

\section*{Acknowledgment}
The authors would like to thank Gregory Herschlag for helpful discussions.
This material is based on work supported by the NSF under Grants 2120757 and 2106213. 
Any opinions, findings, and conclusions or recommendations expressed in this material are those of the authors and do not necessarily reflect the views of the NSF.

\bibliographystyle{IEEEtran}
\bibliography{refs}

\end{document}